\DeclareMathOperator{\Ind}{Ind}
\DeclareMathOperator{\ger}{span}
\DeclareMathOperator{\Ran}{Ran}
\DeclareMathOperator{\vect}{vec}
\newtheorem{theorem}{Theorem}[section]
\newtheorem{lemma}[theorem]{Lemma}
\newtheorem{pro}[theorem]{Proposition}
\newtheorem*{conj*}{Conjecture}
\newtheorem{remark}[theorem]{Remark}
\theoremstyle{definition}
\newtheorem{definition}[theorem]{Definition}
\newtheorem{example}[theorem]{Example}
\theoremstyle{remark}
\numberwithin{equation}{section}
\newcommand{\bracket}[2]{\langle{#1}|{#2}\rangle}
\newcommand{\ceil}[1]{\lceil {#1}\rceil}
\newcommand{\opn}[1]{\operatorname{#1}}
\newcommand{\mbb}[1]{\mathbb{#1}}
\newcommand{\bs}[1]{\boldsymbol{#1}}
\DeclareMathOperator{\diag}{diag}
\def\eps{\varepsilon}
\def\>{\rangle}
\def\<{\langle}
\def\Tr{\opn{Tr}}
\def\0{\bs{0}}
\def\1{\mathbbm{1}}
\def\C{\mbb{C}}
\def\HH{\mathscr{H}}
  \def\XXint#1#2#3{{\setbox0=\hbox{$#1{#2#3}{\int}$}
      \vcenter{\hbox{$#2#3$}}\kern-.47\wd0}}
\tikzstyle{vertex}=[circle, draw, inner sep=0pt, minimum size=4pt]
\tikzset{main node/.style={circle,draw,minimum size=1cm,inner sep=0pt},
}
\begin{document}
\pagestyle{plain}
%\pagenumbering{arabic}
%\bibliographystyle{plain}

\def\beq{\begin{equation}}
\def\eeq{\end{equation}}
\def\eps{\epsilon}
\def\laa{\langle}
\def\raa{\rangle}
\def\qed{\begin{flushright} $\square$ \end{flushright}}
\def\qee{\begin{flushright} $\Diamond$ \end{flushright}}
\def\ov{\overline}
\def\bma{\begin{bmatrix}}
\def\ema{\end{bmatrix}}

\def\ora{\overrightarrow}

\def\bma{\begin{bmatrix}}
\def\ema{\end{bmatrix}}
\def\bex{\begin{example}}
\def\eex{\end{example}}
\def\beq{\begin{equation}}
\def\eeq{\end{equation}}
\def\eps{\epsilon}
\def\laa{\langle}
\def\raa{\rangle}
\def\qed{\begin{flushright} $\square$ \end{flushright}}
\def\qee{\begin{flushright} $\Diamond$ \end{flushright}}
\def\ov{\overline}

\author[cfelipe]{C. F. Lardizabal}
\address{Instituto de Matem\'atica e Estat\'istica, Universidade Federal do Rio Grande do Sul, Porto Alegre, RS  91509-900 Brazil.}
\email{cfelipe@mat.ufrgs.br (Corresponding author)}

\author[lv]{L. F. L. Pereira}
\address{Instituto de Matem\'atica e Estat\'istica, Universidade Federal do Rio Grande do Sul, Porto Alegre, RS  91509-900 Brazil.}
%\email{licks_90@hotmail.com}

\date{\today}

\title{Hitting time expressions for quantum channels: beyond the irreducible case and applications to unitary walks}

\begin{abstract} 
In this work we make use of generalized inverses associated with quantum channels acting on finite-dimensional Hilbert spaces, so that one may calculate the mean hitting time for a particle to reach a chosen goal subspace. The questions studied in this work are motivated by recent results on quantum dynamics on graphs, most particularly quantum Markov chains. We focus on describing how generalized inverses and hitting times can be obtained, with the main novelties of this work with respect to previous ones being that a) we are able to weaken the notion of irreducibility, so that reducible examples can be considered as well, and b) one may consider arbitrary arrival subspaces for general positive, trace preserving maps. Natural examples of reducible maps are given by unitary quantum walks. We also take the opportunity to explain how a more specific inverse, namely the group inverse, appears in our context, in connection with matrix algebraic constructions which may be of independent interest.
\end{abstract}

\maketitle

{\bf Keywords:} positive maps; quantum Markov chains; quantum walks; mean hitting times; group inverse.

%\tableofcontents

\section{Introduction}\label{sec1}

In quantum information theory, there is a natural interest in the problem of finding
quantum versions of classical probabilistic notions. For instance, just like stochastic matrices acting on probability vectors are the objects that describe the statistical evolution of Markov chains associated with classical particles on some graph \cite{bremaud,kemeny,norris}, one may say that their quantum counterparts are the so-called quantum channels, that is, maps of the form
$$T:M_n\to M_n,\;\;\;T(\rho)=\sum_{i} V_i\rho V_i^*,$$
where $M_n$ denotes the order $n$ matrices, and $V_i\in M_n$ satisfy $\sum_i V_i^*V_i=I$, which is equivalent to say that $T$ is  trace preserving \cite{bhatia,wolf}. We usually require that such maps act on density matrices, that is, on positive semidefinite matrices of trace 1. We note that if $U\in M_n$ is a unitary matrix then the conjugation map $\rho\mapsto U\rho U^*$ is an important particular case of this setting.

\medskip

In this work we are interested in statistics associated with quantum particles acting on some finite-dimensional Hilbert space. Namely, we will study the mean time of first visit of a quantum particle to a goal subspace under the iterates of a quantum channel. Notions of quantum hitting times are well-known in the context of unitary quantum walks, most notably Szegedy's walk \cite{portugal,szegedy}, and considerable detail on their statistics are known \cite{salvador}. On the other hand, hitting times associated with quantum channels seem to be less understood and, just like the unitary case, more than one definition is possible. One way of defining such a notion is in terms of a so-called monitoring procedure, for which we perform projection operators onto the subspaces of interest, and their complements, to verify whether a particle has been found. This approach has been explored in several works in recent years, see for instance \cite{ambainis,glv,werner,kuklinski,gawron}, and will be a basic element of our setting.

\section{Motivation: hitting times of quantum dynamics and their generalized inverses}\label{intro2}

The question we will address in this work is motivated by the following setting, coming from the classical theory of Markov chains: given a graph and transition probabilities between its vertices, what is the mean time for a walker to reach vertex $j$ for the first time, given that it has started at vertex $i$? Formally, the mean hitting time is given by
$$m_{ij}=E_i(T_j)=\sum_{t=1}^\infty tP_i(T_j=t),$$
where $T_j$ is the random variable given by the time of first visit to vertex $j$, and $P_i(T_j=t)$ is the probability that $T_j=t$, given that the walk begins at position $i$. 

\medskip

From the theory of Markov chains we know that, alternatively, the mean hitting time can be calculated without resorting to its definition directly. A well-known method is via the {\bf fundamental matrix} associated with a finite, irreducible Markov chain with stochastic matrix $P$,
$$Z=(I-P+\Omega)^{-1},$$
where $\Omega$ is the matrix whose columns are given by $w=(w_i)$, the unique fixed probability vector for $P$. Assuming $P$ is column stochastic, the following equation is valid:
\beq\label{clmhtf} E_i(T_j)=\frac{Z_{jj}-Z_{ji}}{w_j}.\eeq
This is the {\bf mean hitting time formula (MHTF)}, and is one of several expressions relating $Z$ with statistical quantities of the walk \cite{aldous,bremaud,kemeny}. 

\medskip

In a similar vein,  we have the following result. If $A$ is a matrix, denote by $A_d$ the diagonal matrix whose diagonal entries are the ones from $A$. Let $e=[1\; 1 \; \cdots \; 1]^T$ and let $E=ee^T$, that is, the matrix whose entries are all equal to 1.

\begin{theorem}\label{ht1}\cite{hunter} Let $M=(m_{ij})$ be the mean time of first visit matrix for an irreducible Markov chain with column stochastic transition matrix $P$ and stationary probability vector $w$. Then $G=(I-P+ue^T)^{-1}+fe^T$ is a generalized inverse of $I-P$, for any choice of vectors $f$ and $u$ with $\langle u|w\rangle \neq 0$. Moreover, if $\Omega=w e^T$ and $D=(\Omega_d)^{-1}$, then
\beq\label{htf1} M=D[I-G+G_dE].\eeq
\end{theorem}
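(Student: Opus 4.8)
The plan is to treat the two assertions separately, both resting on the structural identities that irreducibility and stochasticity impose on $A:=I-P$: since $P$ is stochastic, $Ae=0$, while $\pi^TA=0$ and $\pi^Te=1$; moreover $1$ is a simple eigenvalue of $P$, so $\ker A=\spn\{e\}$ and $\ran A=\{\pi\}^\perp$. For the first assertion I would first note that the rank-one perturbation $eu^T$ renders $B:=I-P+eu^T$ nonsingular under the genericity hypothesis on $u$, so that $G=B^{-1}+ef^T$ is well defined. The summand $ef^T$ is invisible to the defining relation of a generalized inverse, because $A(ef^T)A=(Ae)f^TA=0$; hence it suffices to prove $AB^{-1}A=A$.

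The key computation for this is $Be=(A+eu^T)e=(u^Te)e$, so (as $u^Te\neq0$) $B^{-1}e=(u^Te)^{-1}e$; dually $\pi^TB=u^T$, i.e. $u^TB^{-1}=\pi^T$, which is the analogue for $B^{-1}$ of the defining property $\pi^TZ=\pi^T$ of the fundamental matrix. Then $B^{-1}A=B^{-1}(B-eu^T)=I-(u^Te)^{-1}eu^T$, and left-multiplying by $A$ and invoking $Ae=0$ collapses the correction term, leaving $AB^{-1}A=A$. This settles that $G$ is a generalized inverse of $I-P$ for every admissible $f,u$.

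For the hitting-time formula I would start from first-step analysis: conditioning on the first transition gives $m_{ij}=1+\sum_{k\neq j}P_{ik}m_{kj}$ for all $i,j$, which in matrix form reads $AM=E-PM_d$, where $M_d$ is the diagonal of $M$ and equals $\diag(1/\pi_i)$ by Kac's recurrence formula $m_{ii}=1/\pi_i$. Subtracting the diagonal turns this into $A(M-M_d)=E-M_d$, a system that is consistent since $\pi^T(E-M_d)=e^T-e^T=0$ places each column of the right-hand side in $\ran A$. Because $AG$ acts as the identity on $\ran A$ and the homogeneous solutions have columns in $\ker A=\spn\{e\}$, the general solution is $M-M_d=G(E-M_d)+ec^T$ for a row vector $c^T$, and $c$ is pinned down uniquely by the demand that $M-M_d$ have zero diagonal.

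Executing this diagonal correction is where the stated form emerges, and it is the \emph{main obstacle}. Using $Ge=\gamma e$ with $\gamma=(u^Te)^{-1}+f^Te$ (a consequence of $B^{-1}e\in\spn\{e\}$), one gets $G(E-M_d)=\gamma E-GM_d$, while the diagonal-cancelling term $ec^T$ contributes a matching multiple of $E$ together with a spread-out copy of the diagonal of $G$, built from $G_d$ and $E$. The crucial phenomenon is that the free scalar $\gamma$, and with it all dependence on the choices of $f$ and $u$, cancels, so that only the combination $I-G+G_dE$ (equivalently $EG_d$, according to the indexing convention) survives; reassembling $M=(M-M_d)+M_d$ then produces the asserted matrix identity $M=D[I-G+G_dE]$, with $D$ the relevant diagonal built from the stationary data. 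The delicate point is precisely this \emph{invariance}: one must verify that the arbitrary pieces introduced by the generalized inverse annihilate one another under the specific combination appearing in the formula, and simultaneously that the diagonal reproduces $m_{ii}=1/\pi_i$; the relations $Ge\in\spn\{e\}$ and $u^TB^{-1}=\pi^T$ are exactly what make this bookkeeping close uniformly in $f$ and $u$.
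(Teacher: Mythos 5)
Your proposal is, in substance, Hunter's original argument, which is exactly the proof the paper defers to for this statement (Theorem \ref{ht1} is cited, not proved, in the paper), and it follows the same skeleton that the paper itself reuses in the Appendix for the quantum generalization (Theorem \ref{BeyondIrreducibleTheorem}): first-step conditioning yielding a consistent singular linear system, parametrization of all solutions via a g-inverse plus a kernel term, and pinning the free term by the diagonal through Kac's formula $m_{ii}=1/\pi_i$. Your individual computations are sound: $A(ef^T)A=0$ under your convention, $B^{-1}A=I-(u^Te)^{-1}eu^T$ hence $AB^{-1}A=A$, the consistency check $\pi^T(E-M_d)=0$, and the cancellation of $\gamma$ (and of $f$, via $G_{jj}-G_{ij}$) in the final assembly.

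One point you glossed over should have been flagged explicitly. You work in the row-stochastic convention ($Pe=e$, $\pi^TP=\pi^T$), and there the invertibility of $B=I-P+eu^T$ is equivalent to $u^Te\neq 0$ --- which is what you actually invoke --- and \emph{not} to the stated hypothesis $\langle u|\pi\rangle\neq 0$: if $u^Te=0$ then $Be=0$ and $B$ is singular no matter what $u^T\pi$ is. Correspondingly, your derivation lands on $M=[I-G+EG_d]\,\diag(1/\pi_1,\dots,1/\pi_n)$, the transpose of the printed $M=D[I-G+G_dE]$ (note also that Kac forces $D=(\Pi_d)^{-1}$, the diagonal of mean return times, rather than the printed $D=\Pi_d$). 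These mismatches are artifacts of the paper's statement being a transposed, column-stochastic adaptation of Hunter, made to match the quantum conventions $\Phi|\pi\rangle=|\pi\rangle$ and $\langle e_I|\Phi=\langle e_I|$ of Lemma \ref{generalg} and Theorem \ref{discreteMHTF2019}; your argument transposes verbatim to that convention, but a careful write-up must commit to one convention, state the corresponding nondegeneracy condition ($u^Te\neq 0$ in yours), and only then identify the resulting formula with the printed one, rather than silently substituting the hypothesis and appealing to ``the indexing convention'' at the end.
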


The above statement is adapted from the article due to Hunter \cite{hunter} but, up to our knowledge, such formula has appeared for the first time in the monograph due to Kemeny and Snell \cite{kemeny}, where matrix $G$ is actually the fundamental matrix $Z$, also see the work due to Meyer \cite{meyer-artigo}. We will call expression (\ref{htf1}) the {\bf Kemeny-Snell-Meyer-Hunter (KSMH) formula}.

\medskip

It can be shown that the fundamental matriz $Z$ is also a generalized inverse of $I-P$ and, moreover, that the MHTF (\ref{clmhtf}) is a particular case of Theorem \ref{ht1}. 

\medskip

Motivated by these classical results, the following is a natural question:

\medskip

{\bf Question:} regarding dynamics described by a quantum channel acting on a finite-dimensional Hilbert space, is it possible to obtain mean hitting time expressions in terms of generalized inverses associated with the evolution?

\medskip

In the following section we will give a positive answer to this question, first reviewing what is known and then stating the new result.

\medskip

\section{Overview of results and structure of the work}\label{intro3}

In the context of quantum information, regarding the calculation of mean hitting times of quantum channels in terms of generalized inverses, the following results were obtained so far:

\medskip

\begin{enumerate}
\item A version of the MHTF (\ref{clmhtf}) for irreducible Open Quantum Walks (OQWs) on finite graphs was proved in \cite{c2017}. OQWs are a particular class of quantum channels for which the underlying Hilbert space is identified with the nodes of some chosen graph, together with the internal degrees of freedom of the quantum particle at each node \cite{attal}. 

\medskip

\item A version of the MHTF (\ref{clmhtf}) was proved for irreducible positive maps on finite-dimensional Hilbert spaces \cite{cv}, thus including the result from \cite{c2017} as a particular case.

\medskip

\item A version of the KSMH formula (\ref{htf1}) for the setting of irreducible quantum Markov chains (QMCs) on finite graphs was proved in \cite{oqwmhtf2}. QMCs were defined by S. Gudder \cite{gudder}, and include OQWs as a particular class of channels.
\end{enumerate}

QMCs are briefly revised in this work, as their structure will be of assistance for the proof of our new result.

\medskip

We remark that in all of the works above, we have the assumption that the operators are irreducible, which can be seen as a kind of ``connectivity'' of the dynamics. As we will be considering particles with internal degrees of freedom, one should work with a careful, precise definition regarding the accessibility of subspaces. This will be properly revised in this work.

\medskip

With the above results in mind, we may state the following:

\medskip

{\bf Problem:} regarding quantum dynamics described by an irreducible quantum channel acting on a finite-dimensional Hilbert space, we know that it is possible to study mean hitting times in terms of generalized inverses. On the other hand, is it possible to obtain hitting time formulae without the assumption of irreducibility?

\medskip

The above problem goes beyond the mathematical discussion about dropping a technical assumption: many natural examples in the quantum setting are reducible, the most important of which are the unitary conjugations. We note that, unlike the case of stochastic matrices, reducibility of a quantum walk is not simply described by checking the accessibility of a vertex of the graph, starting from another. Instead, one has to consider the degeneracy of the spectrum of the associated positive operator. Therefore, the importance of providing an answer to this question comes from both theoretical and applied motivations.

\medskip

{\bf Our solution:} The first element of our solution regards a spectral condition on the QMC $\Lambda=\Lambda_{T,V}$ induced by the quantum channel $T$, which is also dependent on the choice of the goal subspace $V$ of interest. Namely, if $P$ is the orthogonal projection onto $V$ and $Q=I-P$, $\mathbb{Q}=Q\cdot Q$, we define the following:

\bigskip

\underline{Assumption I}: the number 1 does not belong to the spectrum of $\mathbb{Q}\Lambda$. Equivalently, no non-null fixed point of $\Lambda$ is supported on $V^\perp$.

\bigskip

By [\cite{glv}, proof of Thm. 2.8], we have that every irreducible channel satisfies Assumption I, but there are reducible examples that also satisfy this condition. Therefore, we have a condition which is strictly weaker than irreducibility and will allow us to explore, for instance, unitary quantum walks, under the light of generalized inverses. The reason why Assumption I is a natural one is somewhat intuitive, as we wish to exclude the possibility of being trapped in the complement of the goal subspace. Nevertheless, in later sections we will provide the details regarding the use of this hypothesis in the proof or our theorem.

\medskip

The second element of our solution deals with the following issue: given a not necessarily irreducible QMC $\Lambda$, one has to clarify whether one is able to obtain a generalized inverse of $I-\Lambda$. Here, we will make use of the so-called group inverse \cite{meyer,meyer-artigo}, an object which is available in our context, and is a perfect fit to our problem. We will provide a brief but precise discussion on such inverse, noting that this construction may be of independent interest to the reader.

\medskip

As in previous works, and similarly to the classical case, the message is that generalized inverses combined with information about return times encode all the information about mean hitting times. The improvement presented in this paper is a substantial enlargement of the class of maps and subspaces to which this assertion applies.

\medskip

The contents of this work are as follows. In Section \ref{sec2} we discuss preliminaries on positive maps, probability calculations in terms of traces of such operators (the monitoring formalism), and we review the basic material on generalized inverses. In Section \ref{sec3} we review QMCs, as their simple and intuitive structure will be employed in our method for writing hitting time formulae for quantum channels. For completeness, we review the known quantum hitting time expressions in Section \ref{sec4}. In Section \ref{sec5} we define QMCs induced by quantum channels. Our main result will consist of using the group inverse so that one obtains a hitting time expression that holds for any quantum channel, with respect to any goal subspace that satisfies Assumption I. The theorem is stated in Section \ref{sec7} and its formal proof is given in the Appendix. In Section \ref{sec8} we give examples, including the setting of unitary conjugations. We present a brief summary in Section \ref{sec9}.

\section{Preliminaries}\label{sec2}

\subsection{Positive maps}

We refer the reader to \cite{bhatia} for more on the basics of positive maps. For $A \in M_n$, we write its Hilbert adjoint as $A^*$. We say that $\rho\in M_n$ is {\bf positive semidefinite} (or positive, for short), denoted by $\rho \geq 0$, when $\langle v |\rho v\rangle\geq 0$ for all $v\in \mathbb{C}^n$, where $\langle \:\cdot\:|\:\cdot\:\rangle$ is the inner product of $\mathbb{C}^n$. If $\langle v|\rho v \rangle>0$ for all $v\neq 0$, we say $\rho$ is {\bf positive definite} (or strictly positive), denoted by $\rho>0$. A matrix $\rho\in M_n$ is called a {\bf density matrix} whenever $\rho\geq 0$ and $\mathrm{Tr}(\rho)=1$. We also say that a strictly positive density is a {\bf faithful state}.

 \medskip
	
A linear map $T:M_n\to M_n$ is {\bf trace preserving} (TP) whenever $\Tr\big(T(\rho)\big)=\Tr(\rho)$, for all $\rho \in M_n$. If we have that $T(\rho)\geq 0$ whenever $\rho \geq 0$, then we say $T$ is a {\bf positive map}. A positive map $T$ is said to be {\bf completely positive} (CP) whenever it admits a Kraus decomposition, that is, if there exist $V_i\in M_n$, $i=1,\dots, k$ such that
$$T(X)=\sum_{i=1}^k V_i\rho V_i^*,\;\;\;X\in M_n.$$
A {\bf quantum channel} is a completely positive, trace preserving map (CPTP). We will have such operators in mind as we discuss hitting times, as this is one of the most studied classes of operators in the realm of quantum information theory. However, all results presented in this work also hold, more generally, to the case of positive, trace preserving maps on finite-dimensional Hilbert spaces, as the preliminary results and formal proofs do not depend on the existence of some Kraus decomposition of the positive maps involved.

\medskip

We say that the positive map $T:M_n\to M_n$ is {\bf irreducible} if the only orthogonal projections $P$ on $M_n$ such that $T$ leaves $P M_nP$ globally invariant -- that is, $T(P M_nP)\subset  PM_nP$ -- are $P=0$ and $P=I_n$, where $I_n$ is the order $n$ identity. Otherwise, we say $T$ is {\bf reducible}. We note that for positive maps acting on finite-dimensional spaces, irreducibility is equivalent to the existence of a unique faithful fixed point. 

\medskip

Every positive trace preserving map $T$ on $M_n$ has an {\bf invariant state} $w$, i.e. a density matrix such that $T(w)=w$. Then, if $T$ is irreducible, we have that $w>0$ and any other fixed point $X\in M_n$ of $T$ is a multiple of $w$, i.e. $T(X)=X$ implies $X=\lambda w$, $\lambda\in\C$, see \cite{evans}.

\medskip

If $A\in M_n$, the corresponding vector representation $vec(A)$ associated with it is given by stacking together its matrix rows. For instance, if $n=2$,
$$
A = \begin{bmatrix} A_{11} & A_{12} \\ A_{21} & A_{22}\end{bmatrix}
\quad \Rightarrow \quad 
vec(A) = \begin{bmatrix} A_{11} & A_{12} & A_{21} & A_{22}\end{bmatrix}^T.
$$
The $vec$ mapping satisfies $vec(AXB^T)=(A\otimes B)\,vec(X)$ for any square matrices $A, B, X$ \cite{hj2}. In particular, $vec(BXB^*)=vec(BX\ov{B}^T)=(B\otimes \ov{B})\,vec(X)$. More generally, this mapping allows us to consider the {\bf matrix representation} of any linear map $T$ acting on $M_n$: we define $\lceil T\rceil$ to be the matrix such that 
\beq\label{vecbas}T(X)=vec^{-1}(\lceil T\rceil \, vec(X)),\;\;\;X\in M_n.\eeq
This matrix representation will be of assistance when we perform calculations in the examples later in this work.

\medskip

We will make use of the Dirac (ket-bra) notation for vectors, that is, if $\psi\in \mathbb{C}^n$, we write $|\psi\rangle$ to represent the column vector corresponding to $\psi$ and $\langle\psi|:=(|\psi\rangle)^*$. Any matrix $A\in M_n$ may be viewed as an operator $\phi \mapsto A\phi$ on the Hilbert space $\HH_n=\C^n$ with the standard inner product. In this space, ket-bra operators also make sense: any two vectors $\phi,\psi\in\HH_n$ generate the ket-bra operator $|\phi\rangle\langle\psi|=\phi\psi^*\in M_n$. The unit vectors $\phi\in\HH_n$ are called {\bf pure states} since they may be identified up to an arbitrary phase with the states $\rho_\phi:=|\phi\rangle\langle\phi|=\phi\phi^*$. We will refer to $\HH_n$ as the {\bf pure state Hilbert space} for $M_n$.

\subsection{Probabilistic notions for positive maps}\label{subsec42}

Let $V$ denote a subspace of the pure state Hilbert space $\HH_n$ where the operators represented by the matrices of $M_n$ act. If $P$ is the orthogonal projection onto $V$ and $Q=I_n-P$, we introduce the corresponding orthogonal projections, $\mathbb{P}$ and $\mathbb{Q}$, for the space $M_n$ where the mixed states live,
$$
\mathbb{P},\mathbb{Q}:M_n\to M_n, \qquad \mathbb{P}=P\cdot P, \qquad \mathbb{Q}=Q\cdot Q.
$$
Note  that $\mathbb{P}+\mathbb{Q}+\mathbb{R}=I$, where $\mathbb{R}:=P\cdot Q+Q\cdot P$ is an orthogonal projection onto traceless matrices.

\medskip

Given a pure state $\phi\in\HH_n$, we define the following probabilistic quantities for a discrete time process whose steps follow by iterating a positive trace preserving map $T$ on $M_n$:
\begin{equation} \label{prob}
\begin{aligned}
& p_r(\phi \to V) = & & \text{probability of reaching the subspace $V$ in $r$ steps when starting at $\phi$.} 
\\
& \pi_r(\phi\to V) = & & \text{probability of reaching the subspace $V$ for the first time in $r$ steps} 
\\
& & & \text{when starting at $\phi$.} 
\\
& \pi(\phi\to V) = & & \text{probability of ever reaching the subspace $V$} 
& & \kern-100pt \text{\bf(hitting probability)} 
\\
& & & \text{when starting at $\phi$.}  
\\
& \tau(\phi\to V) = & & \text{expected time of first visit to the subspace $V$}
& & \kern-100pt \text{\bf(mean hitting time)}
\\
& & & \text{when starting at $\phi$.}  
\end{aligned}
\end{equation}
We will refer to $\phi$ and $V$ as the {\bf initial state} and the {\bf arrival (or goal) subspace} respectively. An important particular case of the subspace $V$ is the 1-dimensional space generated by a pure state $\psi\in\HH_n$. In this case we will simplify the notation by writing $\pi(\phi\to \psi)$, $\tau(\phi\to \psi)$, and so on.   

\medskip

Another particularly interesting situation arises when $\phi\in V$. This defines the so-called {\bf recurrence properties} of the process governed by the map $T$, for instance, the probability $\pi(\phi\to V)$ of the return of $\phi$ to the subspace $V$ {\bf (return probability)}, and the expected time $\tau(\phi\to V)$ of such a return {\bf (mean return time)}.

\medskip

According to the monitoring approach, the above notions can be expressed in terms of appropriate trace expressions involving maps on $M_n$ acting on the density matrix $\rho_\phi=|\phi\>\<\phi|$ of the initial state:
\begin{flalign}
& p_r(\phi\to V) = \Tr(\mathbb{P}T^r \rho_\phi), 
\nonumber \\
& \pi_r(\phi\to V) = \Tr(\mathbb{P}T(\mathbb{Q}T)^{r-1} \rho_\phi), 
\nonumber \\
& \pi(\phi\to V) = \sum_{r\geq 1} \pi_r(\phi\to V) = 
\sum_{r\geq 1} \Tr(\mathbb{P}T(\mathbb{Q}T)^{r-1} \rho_\phi), 
\label{probPOS}  \\
&\tau(\phi\to V)=
\begin{cases}
\infty, 
& \text{ if } \pi(\phi\to V)<1,  
\\
\displaystyle \sum_{r\geq 1} r\pi_r(\phi\to V) = 
\sum_{r\geq 1} r\Tr(\mathbb{P}T(\mathbb{Q}T)^{r-1} \rho_\phi), 
& \text{ if } \pi(\phi\to V)=1. 
\end{cases} 
\nonumber 
\end{flalign}
Here and in what follows we write for convenience $T\rho$ to denote $T(\rho)$ for any $\rho\in M_n$. Since $I-\mathbb{Q}=\mathbb{P}+\mathbb{R}$ differs from $\mathbb{P}$ in an orthogonal projection $\mathbb{R}$ onto traceless matrices, the traces appearing in \eqref{probPOS} may be equivalently expressed by substituting $\mathbb{P}$ by $I-\mathbb{Q}$. 

\medskip

Given a trace preserving, positive map $T$ on $M_n$, it turns out that $\pi(\phi\to V)=1$ for every pure state $\phi$ and every subspace $V$ satisfying Assumption I (the proof given in [\cite{glv}, Prop. 2.7] for CPTP maps also holds for positive TP maps). Therefore, in this case the expression of the mean hitting time is the one in the bottom line of \eqref{probPOS}.

\medskip

In order to discuss hitting time formulae, we define the function
\beq\label{gf11}
\mathbb{G}(z) = 
\sum_{r\geq 1} T(\mathbb{Q}T)^{r-1}z^r = 
zT(I-z\mathbb{Q}T)^{-1}, 
\qquad |z|<1,
\eeq
which is analytical since, with respect to the operator norm, $\|\mathbb{Q}\|=1$ as an orthogonal projection, while the Russo-Dye theorem  \cite{bhatia} implies that $\|T\|=\|T^*\|=\|T^*(I_n)\|=\|I_n\|=1$ since $T$ is trace preserving. The interest of this function relies on the fact that, according to \eqref{probPOS}, its boundary behaviour around $z=1$ provides hitting probabilities and mean hitting times,
\begin{equation} \label{HKlim}
\begin{aligned} 
& \pi(\phi\to V) = \Tr(\mathbb{P}H\rho_\phi) = \Tr((I-\mathbb{Q})H\rho_\phi),
& \qquad & H:=\lim_{x\uparrow 1}\mathbb{G}(x),
\\
& \tau(\phi\to V) = \Tr(\mathbb{P}K\rho_\phi) = \Tr((I-\mathbb{Q})K\rho_\phi),
& & K:=\lim_{x\uparrow 1}\mathbb{G}'(x).
\end{aligned}
\end{equation}
As limits of positive maps, $H$ and $K$ are positive maps too. We will refer to $H$ and $K$ as the {\bf hitting probability map} and the {\bf mean hitting time map} for the subspace $V$. The existence of these limits, explained in \cite{cv}, is a direct consequence of Assumption I, and is an important point of discussion later in this work.

\medskip

We write the following useful expressions: under Assumption I (stated in Section \ref{intro3}), we have the analyticity of $\mathbb{G}(z)$ at $z=1$, which allows us to express  
\begin{equation}\label{HK}
\begin{aligned}
& H = \mathbb{G}(1) = T(I-\mathbb{Q}T)^{-1},
\\
& K = \mathbb{G}'(1) = 
T(I-\mathbb{Q}T)^{-1} + 
T\mathbb{Q} T (I-\mathbb{Q}T)^{-2} =
T(I-\mathbb{Q}T)^{-2}.
\end{aligned}
\end{equation}
%In particular, with respect to a trace preserving, positive map on $M_n$, every non-trivial subspace has a finite mean %hitting time if it satisfies Assumption I. 

\medskip

Now consider the following $V$-dependent {\bf block matrix representation} of a map $T \colon M_n \to M_n$:
\beq\label{phirep}
T = 
\begin{bmatrix} T_{11} & T_{12} \\ T_{21} & T_{22}\end{bmatrix} :=
\begin{bmatrix} 
(I-\mathbb{Q})T(I-\mathbb{Q}) & (I-\mathbb{Q})T\mathbb{Q} \\ 
\mathbb{Q}T(I-\mathbb{Q}) & \mathbb{Q}T\mathbb{Q}
\end{bmatrix}.
\eeq

\medskip

The block matrix representation of the positive maps $H$ and $K$ yields information about the first visit to the subspace $V$, when starting at states supported on $V$ or $V^\bot$. More precisely, from \eqref{HKlim} we find that
\begin{align} \label{piH}
& \begin{aligned}
& \pi(\phi\to V) = 
\begin{cases} 
\Tr((I-\mathbb{Q})H(I-\mathbb{Q})\rho_\phi) = \Tr(H_{11}\rho_\phi), 
& \text{if} \; \phi\in V, 
\\
\Tr((I-\mathbb{Q})H\mathbb{Q}\rho_\phi) = \Tr(H_{12}\rho_\phi), 
& \text{if} \; \phi\in V^\bot,
\end{cases}
\qquad H=\begin{bmatrix} H_{11} & H_{12} \\ H_{21} & H_{22} \end{bmatrix},
\end{aligned}
\\ \label{tauK}
& \begin{aligned}
& \tau(\phi\to V) = 
\begin{cases} 
\Tr((I-\mathbb{Q})K(I-\mathbb{Q})\rho_\phi) = \Tr(K_{11}\rho_\phi), 
& \text{if} \; \phi\in V, 
\\
\Tr((I-\mathbb{Q})K\mathbb{Q}\rho_\phi) = \Tr(K_{12}\rho_\phi), 
& \text{if} \; \phi\in V^\bot,
\end{cases}
\qquad K=\begin{bmatrix} K_{11} & K_{12} \\ K_{21} & K_{22} \end{bmatrix}.
\end{aligned}
\end{align}
In particular, the first diagonal blocks of $H$ and $K$ provide recurrence properties: $H_{11}$ and $K_{11}$ give respectively the probabilities and mean times for the return of a state $\phi\in V$ to the subspace $V$ where it lies. We will call $H_{11}$ and $K_{11}$ the {\bf return probability map} and the {\bf mean return time map} for the subspace $V$, respectively.

\subsection{Generalized inverses}\label{sec43} The reader interested in more details on generalized inverses may consult \cite{meyer}, also see \cite{hunter,oqwmhtf2} and references therein. Here we review only the necessary facts for our results.

		\begin{definition}\label{ginvdefinition}
		A {\bf generalized inverse}, or a {\bf g-inverse} of a matrix $A\in M_n$, is any matrix $A^-$ such that 
		\begin{align*}
			AA^-A=A.
		\end{align*}
	\end{definition}
We can see that if a matrix $A$ is invertible, then $ A^-=A^{-1} $ is the only g-inverse of $A$.  In this work we will be mostly interested in a particular kind of g-inverse. In order to study it, we define the following.
	
	\begin{definition}\label{indexdefinition}
		Given a matrix $A\in M_n$, the smallest nonnegative integer $k$ such that $\Ran(A^k)=\Ran(A^{k+1})$ is the {\bf index} of $A$, denoted by $\Ind(A)=k$.
	\end{definition}
	
	We consider $A^0$ to be the identity matrix for any matrix $A$. So, for example, nonsingular matrices are precisely those with index zero.
	
	\begin{definition}
		If $A\in M_n$ with $\Ind(A)=k$, and if $A^D\in M_n$ is a matrix such that 
		\begin{enumerate}
			\item $A^DAA^D=A^D$
			\item $A^DA=AA^D$
			\item $A^{k+1}A^D=A^k$
		\end{enumerate} then $A^D$ is called the {\bf Drazin inverse} of $A$.
	\end{definition}
	
	It can be shown that the Drazin inverse of any square matrix always exists and is unique \cite{meyer}. We also note, however, that the Drazin inverse is not a g-inverse in general. The next theorem [\cite{meyer}, Thm. 7.2.4] specifies when that happens.
	
	\begin{theorem}\cite{meyer}\label{DrazinGroup}
		Let $A\in M_n$. Then $AA^DA=A$ if, and only if $\Ind(A)\leq 1$.
	\end{theorem}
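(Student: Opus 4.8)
The plan is to reduce both implications to a single algebraic identity obtained from the commutativity property $A^D A = A A^D$ of the Drazin inverse. Using associativity together with this commutativity, one rewrites the g-inverse expression as
\[
A A^D A = A (A^D A) = A (A A^D) = A^2 A^D,
\]
so that the condition $A A^D A = A$ is equivalent to $A^2 A^D = A$. The entire theorem then amounts to showing that this last identity holds precisely when $\Ind(A) \leq 1$.

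For the forward implication, I would assume $\Ind(A) = k \leq 1$ and split into two cases. If $k = 0$, then $A$ is nonsingular, $A^D = A^{-1}$, and $A^2 A^D = A$ holds trivially. If $k = 1$, the third defining property of the Drazin inverse reads $A^{k+1} A^D = A^k$, that is, exactly $A^2 A^D = A$; by the identity above this is the same as $A A^D A = A$, as desired.

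For the reverse implication, I would start from $A A^D A = A$, rewritten via the identity as $A = A^2 A^D$. This exhibits every column of $A$ as an element of $\Ran(A^2)$, giving $\Ran(A) \subseteq \Ran(A^2)$. Since the inclusion $\Ran(A^2) \subseteq \Ran(A)$ always holds, we conclude $\Ran(A) = \Ran(A^2)$, and by the definition of the index this is precisely $\Ind(A) \leq 1$.

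I expect the only delicate point to be the bookkeeping of the exponent $k$ in property (3), which is asserted only for $k = \Ind(A)$; thus in the forward direction one must treat the subcases $k = 0$ and $k = 1$ separately rather than invoking one uniform formula. Alternatively, one could argue through the core--nilpotent decomposition $A = P(C \oplus N)P^{-1}$ with $C$ nonsingular and $N$ nilpotent, under which $A^D = P(C^{-1} \oplus 0)P^{-1}$ and $A A^D A = P(C \oplus 0)P^{-1}$, so that $A A^D A = A$ becomes equivalent to $N = 0$, i.e. to the absence of a nontrivial nilpotent block, which is exactly $\Ind(A) \leq 1$. The axiomatic route is shorter and avoids constructing the decomposition explicitly, so that is the one I would carry out.
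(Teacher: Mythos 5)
Your proof is correct. Note that the paper itself offers no proof of this statement---it is quoted verbatim from Campbell--Meyer [\cite{meyer}, Thm.~7.2.4]---so the relevant comparison is with the textbook argument, which proceeds exactly through the core--nilpotent decomposition you sketch as an alternative: $A=P(C\oplus N)P^{-1}$ with $C$ nonsingular and $N$ nilpotent of nilpotency index $\Ind(A)$, whence $AA^DA=P(C\oplus 0)P^{-1}$ and the condition $AA^DA=A$ collapses to $N=0$. Your primary route is genuinely different and more elementary: the identity $AA^DA=A^2A^D$ from commutativity reduces everything to $A^2A^D=A$, which in the forward direction is property~(3) read off at $k=1$ (with the $k=0$ case handled via $AA^D=I$, i.e.\ $A^D=A^{-1}$), and in the reverse direction yields $\Ran(A)\subseteq\Ran(A^2)$, hence $\Ran(A)=\Ran(A^2)$ and $\Ind(A)\leq 1$ directly from Definition~\ref{indexdefinition}. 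What the axiomatic route buys is self-containedness---it uses only the three defining properties of $A^D$ and the range-based definition of the index, never invoking existence of the decomposition; what the decomposition route buys is transparency (the failure of $AA^DA=A$ is visibly located in the nilpotent block) and it generalizes painlessly to statements about higher index. Your handling of the one delicate point is also right: property~(3) is stated only at $k=\Ind(A)$, so splitting the cases $k=0$ and $k=1$ (or, equivalently, multiplying $AA^D=I$ on the left by $A$ in the $k=0$ case) is the correct bookkeeping.
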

	
	A Drazin inverse which is also a g-inverse receives a special name:
	
	\begin{definition}
		Let $A\in M_n$ with $\Ind(A)\leq 1$. Then the Drazin inverse of $A$ is denoted by $A^\#$ and called the {\bf group inverse} of $A$.
	\end{definition}
	We could have alternatively defined the group inverse of a square matrix $A$ as the unique matrix $A^\#$, when it exists, satisfying the three following conditions:
	\begin{align*}
		A^\#AA^\#=A^\#, \quad A^\#A=AA^\#, \quad AA^\#A=A.
	\end{align*}
	
	Therefore, the group inverse of a given square matrix exists, by Theorem \ref{DrazinGroup}, precisely when its index is not greater than 1. This existence of such inverse will allows us, in particular, to study reducible quantum dynamics as we will see later in this work.

\medskip

By [\cite{meyer}, Thm. 7.6.1], we can calculate the Drazin inverse of $A$ via the limit
	 $$A^D =\lim_{z\to 0}(A^{l+1}+zI)^{-1}A^l,$$ 
	 for every integer $l\geq \Ind(A)=1$. In particular, we can calculate it by taking $l=1$ and obtain
\beq\label{drazinmeth}
		A^D = \lim_{z\to 0}(A^2+zI)^{-1}A.
\eeq

In order to establish the existence of the group inverse in our setting, we recall:

\begin{theorem}\label{wolfproposition}[\cite{wolf}, Prop. 6.2] {\bf(Trivial Jordan blocks for peripheral spectrum).} Let $T:M_n\rightarrow M_n$ be a trace preserving (or unital) positive linear map. If $\lambda$ is an eigenvalue of $T$ with $|\lambda|=1$, then its geometric multiplicity equals its algebraic multiplicity, i.e., all Jordan blocks for $\lambda$ are one-dimensional.
\end{theorem}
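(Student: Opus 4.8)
The plan is to deduce the trivial-Jordan-block structure of the peripheral spectrum from power-boundedness of $T$.

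First I would observe that $T^n$ is again a positive, trace preserving (respectively unital) map for every $n\geq 1$, since a composition of positive maps is positive and a composition of trace preserving maps is trace preserving. Consequently the very estimate already recorded in the excerpt for $\|T\|$ applies verbatim to each power. If $T$ is trace preserving then $(T^n)^*=(T^*)^n$ is unital, so the Russo-Dye theorem gives $\|T^n\|=\|(T^n)^*\|=\|(T^n)^*(I_n)\|=\|I_n\|=1$; if $T$ is unital one argues directly. In either case $\sup_{n}\|T^n\|\leq 1<\infty$, i.e. $T$ is power-bounded, and since $M_n$ is finite-dimensional this conclusion is independent of the choice of norm.

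Second, I would invoke the standard linear-algebra fact that a power-bounded operator on a finite-dimensional space cannot possess a nontrivial Jordan block for an eigenvalue on the unit circle. Concretely, suppose for contradiction that $\lambda$ with $|\lambda|=1$ had a Jordan block $J$ of size $k\geq 2$. Writing $J=\lambda I+N$ with $N$ the nilpotent superdiagonal part, the binomial expansion
\[
J^m=\sum_{j=0}^{k-1}\binom{m}{j}\lambda^{m-j}N^j
\]
has top-right entry $\binom{m}{k-1}\lambda^{m-k+1}$, whose modulus equals $\binom{m}{k-1}\to\infty$ as $m\to\infty$ because $k-1\geq 1$ and $|\lambda|=1$. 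Hence $\|J^m\|\to\infty$, and therefore $\|T^m\|\to\infty$, contradicting the uniform bound from the first step. Thus every Jordan block for $\lambda$ is $1\times 1$, which is precisely the assertion that the geometric and algebraic multiplicities of $\lambda$ coincide.

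The argument is essentially routine once power-boundedness is secured; the only point requiring a little care is the first step, namely ensuring that the norm estimate $\|T^n\|=1$ genuinely transfers to each power rather than merely holding for $T$ itself. The cleanest route is through the adjoint: trace preservation of $T$ is equivalent to unitality of $T^*$, and this passes to powers since $(T^n)^*=(T^*)^n$, after which Russo-Dye supplies the bound uniformly in $n$. Finite dimensionality of $M_n$ then removes any dependence on the particular norm, so the Jordan-block growth estimate and the power bound may be compared directly.
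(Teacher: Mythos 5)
Your proposal is correct and follows essentially the same route as the source the paper cites for this statement: the paper itself gives no proof, importing the result as [\cite{wolf}, Prop.~6.2], and the standard argument there is exactly yours — power-boundedness of $T$ (via Russo--Dye applied to the unital adjoint, or directly when $T$ is unital) contradicting the growth $\binom{m}{k-1}\to\infty$ of the corner entry of $J^m$ for any Jordan block of size $k\geq 2$ attached to a unimodular eigenvalue. The one point to watch is that the Russo--Dye bound $\|T^*\|=\|T^*(I_n)\|=1$ is an operator-norm statement for the unital adjoint, dual to a trace-norm contraction for $T$, so the uniform bound lives in a specific norm; your remark that finite dimensionality makes power-boundedness norm-independent absorbs this, consistently with the paper's own loose use of the same estimate in Section~\ref{subsec42}.
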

	
Then, we have the following:
	
	\begin{pro}\label{groupinverseexists} For every positive, trace preserving map $T$ on a finite-dimensional Hilbert space, the map $A^\#$ exists, where $A=I-T$.
	\end{pro}
	\begin{proof} By Theorem \ref{wolfproposition}, the matrix representation for $T$ has only one-dimensional Jordan blocks relative to the eigenvalue 1, so its Jordan decomposition will have, for some choice of matrix $X$, the form
	\begin{align*}
		 T = X\begin{bmatrix}
			I &  \\  & B
		\end{bmatrix}X^{-1},
	\end{align*} where $I$ is an identity matrix of some dimension and $B$ is a matrix with no eigenvalue equal to 1. It follows that \begin{align*}
		A=I- T = X\begin{bmatrix} O & \\ & I-B \end{bmatrix}X^{-1} \quad \Longrightarrow \quad A^2 = X\begin{bmatrix}
			O & \\ & (I-B)^2
		\end{bmatrix}X^{-1}.
	\end{align*} As $B$ has no eigenvalues equal to 1, it follows that $I-B$ is nonsingular, and hence so is $(I-B)^2$. Then it is clear that $A$ and $A^2$ have the same rank as $I-B$ and $(I-B)^2$, so $\Ind(A)\leq 1$. Therefore, by Theorem \ref{DrazinGroup}, the group inverse $A^\#$ exists. 
	
	\end{proof}

Finally, a basic result relating ergodic means and the group inverse is the following. The proof is given in the Appendix.

\begin{lemma}\label{mnov_lemma} For every positive, trace preserving map $T$ on a finite-dimensional Hilbert space,
\beq\label{limtt1}\lim_{n\to\infty}\frac{I+T+T^2+\cdots+T^{n-1}}{n}=I-A^\#A,\;\;\;\;\;\;A=I-T.\eeq
Moreover, given any density $\rho$, we have that $(I-A^\#A)\rho$ is a density which is invariant for $T$.
\end{lemma}

%By combining this result with [\cite{novotny}, Thm. 2.1], we are able to conclude that if $\rho$ is a state, we have that $%%(I-A^\#A)\rho$ is a fixed state of $T$. In particular, the limit operator obtained above projects onto the space of fixed points %of the dynamics (we note, however, that $I-A^\#A$ is not an orthogonal projection in general). The fact that the columns %of $I-A^\#A$ are fixed for $T$ can also be obtained immediately from the properties of $A^\#$. This latter fact will be %used in the proof of our main result.

\section{A special class of channels: quantum Markov chains}\label{sec3}

Let $\Phi$ be the following matrix of operators,
\begin{align}
		\Phi=\begin{bmatrix} \Phi_{11} & \cdots & \Phi_{1n}\\
			\Phi_{21} &\cdots & \Phi_{2n}\\
			\vdots & \ddots & \vdots \\
			\Phi_{n1} &\cdots &\Phi_{nn}\\
		\end{bmatrix},
	\end{align}
so that each $\Phi_{ij}:M_k\to M_k$ is a positive map. If we define the set of densities on $n$ vertices and $k$ internal degrees of freedom as
	\begin{align*}
		\mathcal D_{n,k}:=\left\{ \rho=\begin{bmatrix} \rho_1\\
			\rho_2\\
			\vdots\\
			\rho_n
		\end{bmatrix} \; : \; \rho_i\in M_k,\;\; \rho_i\geq 0,\; i=1,\dots,n,\;\; \sum_{j=1}^n \Tr(\rho_j)=1\right\},
	\end{align*}
we define the action of $\Phi$ over $D_{n,k}$ as
	\begin{align}\label{qmcbas00}
		\Phi(\rho)=\begin{bmatrix} \Phi_{11} & \cdots & \Phi_{1n}\\
			\Phi_{21} &\cdots & \Phi_{2n}\\
			\vdots & \ddots & \vdots \\
			\Phi_{n1} &\cdots &\Phi_{nn}\\
		\end{bmatrix}\cdot \begin{bmatrix} \rho_1\\ \rho_2 \\ \vdots \\ \rho_n
		\end{bmatrix} := \begin{bmatrix} \sum_{j=1}^n \Phi_{1j}(\rho_j) \\
			\sum_{j=1}^n \Phi_{2j}(\rho_j)\\
			\vdots\\
			\sum_{j=1}^n \Phi_{nj}(\rho_j)
		\end{bmatrix}.
	\end{align}
If we have that $\Phi$ preserves $\mathcal{D}_{n,k}$, that is, $\mathrm{Tr}(\Phi(\rho))=\mathrm{Tr}(\rho)=\sum_i \mathrm{Tr}(\rho_i)=1$, $\rho\in \mathcal{D}_{n,k}$, we say that $\Phi$ is a {\bf quantum Markov chain (QMC)} acting on $n$ vertices with internal degree of freedom $k$. This definition is essentially due to S. Gudder \cite{gudder} except that in such work the $\Phi_{ij}$ are required to be completely positive. We will also consider the cases for which only positivity is assumed, as this simpler assumption extends the applicability of the theory.

\medskip

In the special case for which the $\Phi_{ij}$ are given simply by $\Phi_{ij}(\rho) = B_{ij}\rho B_{ij}^*$,  $B_{ij}\in M_k$  the QMC reduces to what we call an {\bf open quantum walk} (OQW), following Attal et al. \cite{attal}, and the map $\Phi$ can be written as
$$\Phi(\rho) = \sum_{i,j} M_{ij}\rho M_{ij}^*\;,\quad M_{ij}=B_{ij}\otimes |i\rangle \langle j|,$$ 
the above summations ranging over the set of vertices. Then, the preservation of trace is equivalent to 
$$\sum_{i} B_{ij}^*B_{ij} = I,\;\;\;\forall j.$$
In this case, and also more generally, density matrices in $\mathcal{D}_{n,k}$ can be written as $\rho=\sum_{i} \rho_i\otimes|i\rangle$, and we say $\rho$ is a direct sum of positive matrices $\rho_i\in M_k$, each of them concentrated on vertex $i$.
	
	\medskip
	
	We can think of our system as a simulation of a Markov chain on a graph with a set of $n$ \textbf{vertices}, or \textbf{sites}, but the particle performing the walk also has an internal state represented by a linear operator on a Hilbert space of dimension $k$. QMCs and OQWs can be defined in more general complex Hilbert spaces where we do not have necessarily finitely many vertices and degrees of freedom, but our focus in this work will be on finite systems.
	
	\medskip
	
	%\begin{center}
	\begin{figure}[ht]
		
		\begin{adjustbox}{center}
			\begin{tikzpicture}
				[->,>=stealth',shorten >=1pt,auto,node distance=2.0cm,
				semithick]
				\node[main node] (1) {$1$};
				\node[main node] (2) [right=2.0cm and 2.0cm of 1,label={[]$$}]  {$2$};
				
				\node[main node] (3) [below=2.0cm and 2.0cm of 2,label={[]$$}]  {$3$};
				\node[main node] (4) [left = 2.0cm and 2.0cm of 3] {$4$};

			%	\node[] (1a) [left=1.0cm and  1.0cm of 1] {};
			%	\node[] (1b) [above=0.5cm and  0.5cm of 1a] {$...$};

			%	\node[] (2a) [right=1.0cm and  1.0cm of 2] {};
			%	\node[] (2b) [above=0.5cm and  0.5cm of 2a] {$...$};
				
			%	\node[] (3a) [right=1.0cm and  1.0cm of 3] {};
			%	\node[] (3b) [below=0.5cm and  0.5cm of 3a] {$...$};
				
			%	\node[] (4a) [left=1.0cm and  1.0cm of 4] {};
			%	\node[] (4b) [below=0.5cm and  0.5cm of 4a] {$...$};

				%    \node[] (1a) [below = 2.0cm and 2.0cm of 1,label={}]  {};
				
				%(2) edge   [loop right]     node {$B_2^2$} (2)
				
				\path[draw,thick]
				
				(1) edge   [bend left]  node {$\Phi_{21}$}    (2)

				(2) edge   [bend left]     node[above]{$\Phi_{12}$} (1)
				
				(2) edge   [loop right]     node {$\Phi_{22}$} (2)
				
				(2) edge   [bend left]     node {$\Phi_{32}$} (3)
				(3) edge   [bend left]     node[right] {$\Phi_{23}$} (2)
				
				(1) edge   [bend left]     node[left] {$\Phi_{41}$} (4)
				(4) edge   [bend left]     node[left] {$\Phi_{14}$} (1)
				
				(4) edge   [bend left]     node[below] {$\Phi_{34}$} (3)
				%(3) edge   [bend left]     node[below] {$\Phi_{lk}$} (4)

				%(3) edge      node {$$} (1)
				%(1) edge        node {$$} (3)
				
				%(2) edge      node {$$} (4)
				%(4) edge        node {$$} (2)
				
				;
			\end{tikzpicture}
		\end{adjustbox}
		\caption{Schematic illustration of a QMC on 4 vertices. The walk is realized on a graph with a set of vertices denoted by $1,2,3,4$ and each operator $\Phi_{ij}$ is a positive map describing a transformation in the internal degree of freedom of the particle during the transition from vertex $j$ to vertex $i$. In the particular case that for every $i,j$, $\Phi_{ij}=B_{ij}\cdot B_{ij}^*$ for certain matrices $B_{ij}$, we say the QMC is an open quantum walk (OQW).}
	\end{figure}
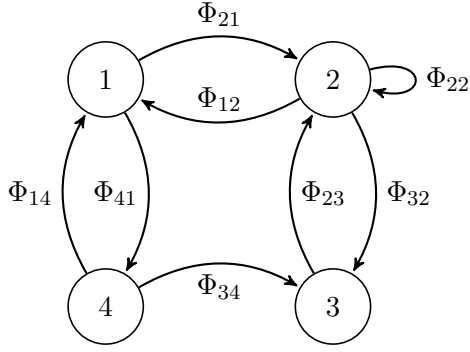\
	%\end{center}

	For a density $\rho\in\mathcal{D}_{n,k}$, we establish the correspondence
	\begin{align*}
		\rho = \begin{bmatrix}
			\rho_1\\ \vdots \\ \rho_n
		\end{bmatrix} \quad \longleftrightarrow \quad |\rho\rangle :=\begin{bmatrix}
			|\rho_1\rangle \\ \vdots \\ |\rho_n\rangle
		\end{bmatrix} = \begin{bmatrix}
			\vect{(\rho_1)} \\ \vdots \\ \vect{(\rho_n)}
		\end{bmatrix} \in \mathbb C^{nk^2},
	\end{align*}
	where we define $|\rho_i\rangle := \vect{(\rho_i)} \in \mathbb C^{k^2}$. Then, if we define the matrix
	\begin{align*}
		\ceil {\Phi} := \begin{bmatrix}
			\ceil{\Phi_{11}} & \cdots & \ceil{\Phi_{1n}}\\
			\vdots & \ddots & \vdots \\
			\ceil{\Phi_{n1}} & \cdots & \ceil{\Phi_{nn}}
		\end{bmatrix} \in M_{nk^2},
	\end{align*} we can express equation (\ref{qmcbas00}) as 
	\begin{align*}
		\ceil{\Phi}|\rho\rangle=\begin{bmatrix} \ceil{\Phi_{11}} & \cdots & \ceil{\Phi_{1n}}\\
			\ceil{\Phi_{21}} &\cdots & \ceil{\Phi_{2n}}\\
			\vdots & \ddots & \vdots \\
			\ceil{\Phi_{n1}} &\cdots &\ceil{\Phi_{nn}}\\
		\end{bmatrix}\cdot \begin{bmatrix}| \rho_1\rangle\\ |\rho_2\rangle \\ \vdots \\ |\rho_n\rangle
		\end{bmatrix} = \begin{bmatrix} \sum_{j=1}^n \ceil{\Phi_{1j}}|\rho_j\rangle \\
			\sum_{j=1}^n \ceil{\Phi_{2j}}|\rho_j\rangle\\
			\vdots\\
			\sum_{j=1}^n \ceil{\Phi_{nj}}|\rho_j\rangle
		\end{bmatrix}, 
	\end{align*}
	which in fact corresponds to the usual multiplication of matrices and vectors: these matrices are partitioned in $n^2$ blocks of order $k^2$, and the vectors are partitioned in $n$ vectors of length $k^2$.

	\medskip
	
	In the case of finite, irreducible QMCs, the iterates $\ceil{\Phi}^m$ converge to $|w\rangle \langle e_{I_k^n}|$ when $m$ goes to infinity \cite{oqwmhtf2}, where 
	\begin{align*}
		|e_{I_k^n}\rangle:=\begin{bmatrix}
			\vect{(I_k)} \\ \vect{(I_k)} \\ \vdots \\ \vect{(I_k)}
		\end{bmatrix} =\begin{bmatrix}
			|e_{I_k}\rangle  \\ |e_{I_k}\rangle \\ \vdots \\ |e_{I_k}\rangle
		\end{bmatrix}\in \mathbb C^{nk^2}, \quad |e_{I_k}\rangle := \vect{(I_k)} \in \mathbb C^{k^2},
	\end{align*} and $|w\rangle$ is the vector form of the fixed density $w$ of the QMC. We also write $|e_{I_k^n}\rangle$ only as $|e_I\rangle$ for simplicity. We call $\Omega:=|w\rangle \langle e_{I_k^n}|$ the {\bf fixed map} associated with the irreducible QMC $\Phi$.
	
	\medskip
	
	We note that the vec function establishes a unitary equivalence between the Hilbert spaces $M_n$ and $\mathbb C^{n^2}$  with their inner products $\langle\; \cdot \mid \cdot\;\rangle_{M_n}$ and $\langle \; \cdot \mid \cdot \; \rangle_{\mathbb C^{n^2}}$ \cite{cv} by the fact that $$\langle B \mid A\rangle_{M_n}=\operatorname{Tr}\left(B^* A\right)=\sum_{i, j} \overline{B_{i j}} A_{i j}=\operatorname{vec}(B)^* \operatorname{vec}(A)=\langle\operatorname{vec}(B) \mid \operatorname{vec}(A)\rangle_{\mathbb{C}^{n^2}}, \quad A, B \in M_n.$$ With this we have that for $\rho_i \in M_k$,
	\begin{align*}
		\Tr(\rho_i) = \langle e_{I_k}|\rho_i \rangle,
	\end{align*} and for $\rho\in \mathcal{D}_{n,k}$, 
	\begin{align*}
		\Tr(\rho) = \sum_i\Tr(\rho_i) = \sum_i \langle e_{I_k} | \rho_i\rangle = \langle e_{I_k^n} | \rho \rangle.
	\end{align*}

\section{A review: hitting time formulae for the irreducible case}\label{sec4}

For completeness of exposition we review some recent results. The following can be seen as a quantum version of the classical MHTF (\ref{clmhtf}) stated in Section \ref{intro2}. For the statement below, one should recall the notation used in (\ref{phirep}) and (\ref{tauK}), regarding the matrix representation of operators. If $T:M_n\to M_n$ is an irreducible, positive, trace preserving map acting on $M_n$, let $w$ be the unique strictly positive density matrix which is invariant for $T$ and define the fundamental map $Z=(I-T+\Omega)^{-1}$, where $\Omega=|w\rangle\langle I_n|$.

\begin{theorem}\cite{cv}\label{mhtf} 
{\bf (MHTF for irreducible positive maps and orthogonal states).} Let $T: M_n\to M_n$ denote an irreducible, positive, trace preserving map and $V$ a nontrivial subspace of the pure state Hilbert space $\HH_n$. If $Z$ is the associated fundamental map, $K$ is the mean hitting time map \eqref{HK} and $D=K_d=diag(K_{11}, K_{22})$ is its diagonal part, we have that for every $\psi\in V$ and $\phi\in V^\perp$,
\beq \label{mhtff1}
\tau(\phi\to V) = \Tr\Big((DZ)_{11}\rho_\psi-(DZ)_{12}\rho_\phi\Big)
= \Tr\Big(K_{11}(Z_{11}\rho_\psi-Z_{12}\rho_\phi)\Big).
\eeq
In particular, the quantity $\Tr((DZ)_{11}\rho_\psi)$ is independent of the choice of $\psi\in V$.
\end{theorem}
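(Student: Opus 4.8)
The plan is to start from the monitoring expression \eqref{tauK}, which for $\phi\in V^\perp$ reads $\tau(\phi\to V)=\Tr(K_{12}\rho_\phi)$, and to reduce the claim to one scalar identity. Since $D=\diag(K_{11},K_{22})$ is block diagonal, $(DZ)_{11}=K_{11}Z_{11}$ and $(DZ)_{12}=K_{11}Z_{12}$, so the two right-hand sides of \eqref{mhtff1} coincide automatically and it remains to prove $\Tr(K_{12}\rho_\phi)=\Tr\big(K_{11}(Z_{11}\rho_\psi-Z_{12}\rho_\phi)\big)$. Because $\rho_\psi=(I-\mathbb{Q})\rho_\psi$ for $\psi\in V$ and $\rho_\phi=\mathbb{Q}\rho_\phi$ for $\phi\in V^\perp$, the right-hand side collapses to $\Tr\big(K_{11}(I-\mathbb{Q})Z(\rho_\psi-\rho_\phi)\big)$. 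Here I would introduce the auxiliary state $\sigma:=Z(\rho_\psi-\rho_\phi)$. Using the defining property $(I-T)Z=I-\Omega$ of the fundamental map, with $\Omega(\cdot)=\Tr(\cdot)\pi$, and the fact that $\rho_\psi-\rho_\phi$ is traceless so $\Omega(\rho_\psi-\rho_\phi)=0$, one gets $(I-T)\sigma=\rho_\psi-\rho_\phi$; and applying $\langle e_I|$ to $Z(I-T)=I-\Omega$ shows $\langle e_I|Z$ annihilates the traceless sector, whence $\Tr\sigma=0$. The goal thus becomes the clean statement $\Tr(K_{12}\rho_\phi)=\Tr\big(K_{11}(I-\mathbb{Q})\sigma\big)$.

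The two structural ingredients I would then assemble are both consequences of irreducibility. First, the \emph{first-passage recurrences} obtained by differentiating $\mathbb{G}(z)=zT(I-z\mathbb{Q}T)^{-1}$ at $z=1$ (legitimate because irreducibility implies Assumption I, so $\mathbb{G}$ is analytic there and the limits $H=\mathbb{G}(1)$, $K=\mathbb{G}'(1)$ exist): from $\mathbb{G}(z)(I-z\mathbb{Q}T)=zT$ one reads off $H(I-\mathbb{Q}T)=T$, and differentiating once more yields $K(I-\mathbb{Q}T)=H$. Second, the probabilistic input $\pi(\phi\to V)=1$ valid for every $\phi$ in the irreducible case, which in functional form is the trace-conservation identity $\langle e_I|(I-\mathbb{Q})H=\langle e_I|$.

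The computation itself is short. I would split $I-T=(I-\mathbb{Q}T)-(I-\mathbb{Q})T$ and apply $K$ on the left, using $K(I-\mathbb{Q}T)=H$, to get $K(\rho_\psi-\rho_\phi)=K(I-T)\sigma=H\sigma-K(I-\mathbb{Q})T\sigma$. Projecting with $I-\mathbb{Q}$ and taking the trace, the term $\Tr\big((I-\mathbb{Q})H\sigma\big)=\langle e_I|(I-\mathbb{Q})H\sigma=\langle e_I|\sigma=\Tr\sigma=0$ vanishes by trace conservation. Identifying $\Tr\big((I-\mathbb{Q})K\rho_\psi\big)=\Tr(K_{11}\rho_\psi)$ and $\Tr\big((I-\mathbb{Q})K\rho_\phi\big)=\Tr(K_{12}\rho_\phi)$, this gives $\Tr(K_{12}\rho_\phi)=\Tr(K_{11}\rho_\psi)+\Tr\big(K_{11}(I-\mathbb{Q})T\sigma\big)$. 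Finally, writing $(I-\mathbb{Q})\sigma=(I-\mathbb{Q})(I-T)\sigma+(I-\mathbb{Q})T\sigma$ and noting $(I-\mathbb{Q})(I-T)\sigma=(I-\mathbb{Q})(\rho_\psi-\rho_\phi)=\rho_\psi$ closes the identity, $\Tr\big(K_{11}(I-\mathbb{Q})\sigma\big)=\Tr(K_{11}\rho_\psi)+\Tr\big(K_{11}(I-\mathbb{Q})T\sigma\big)=\Tr(K_{12}\rho_\phi)$, as required. The ``in particular'' assertion is then automatic: $\Tr(K_{12}\rho_\phi)$ and $\Tr((DZ)_{12}\rho_\phi)$ do not depend on $\psi$, so $\Tr((DZ)_{11}\rho_\psi)$ cannot either.

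The genuinely delicate points are not algebraic but foundational, and this is where I expect the real obstacle to lie. Irreducibility is used twice in an essential way: once through $\pi(\phi\to V)=1$, which delivers the trace-conservation identity that annihilates the $\rho_\psi$-contribution of $H\sigma$; and once through the uniqueness and faithfulness of the invariant state $\pi$, which is what guarantees that $I-T$ is invertible on the traceless sector and hence that the fundamental map $Z$ with $(I-T)Z=I-\Omega$ exists and lets us write $\sigma$ in closed form. Assumption I alone secures the analyticity needed for the recurrences, but it is the stronger irreducibility hypothesis that underlies these two facts; correspondingly, the hard part is less the manipulation above than verifying that the objects $H$, $K$, and $Z$ entering it are well defined, and spotting the substitution $\sigma=Z(\rho_\psi-\rho_\phi)$ together with the decomposition $I-T=(I-\mathbb{Q}T)-(I-\mathbb{Q})T$ that makes the two recurrences interlock.
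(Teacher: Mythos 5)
Your proposal is correct, and it takes a genuinely different route from the paper's own treatment. Every step checks out: block-diagonality of $D$ gives $(DZ)_{11}=K_{11}Z_{11}$ and $(DZ)_{12}=K_{11}Z_{12}$, so the two forms of \eqref{mhtff1} agree; the identity $(I-T)Z=Z(I-T)=I-\Omega$ together with the tracelessness of $\rho_\psi-\rho_\phi$ gives $(I-T)\sigma=\rho_\psi-\rho_\phi$ and $\Tr\sigma=0$; the recurrences $H(I-\mathbb{Q}T)=T$ and $K(I-\mathbb{Q}T)=H$ are legitimate because irreducibility implies Assumption I (as the paper notes via \cite{glv}), so $\mathbb{G}$ is analytic at $z=1$; and the splitting $I-T=(I-\mathbb{Q}T)-(I-\mathbb{Q})T$ plus $(I-\mathbb{Q})(I-T)\sigma=\rho_\psi$ (using $(I-\mathbb{Q})\rho_\phi=0$) closes the computation, with the $\psi$-independence claim following since the other two terms do not involve $\psi$. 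The contrast with the paper: Theorem \ref{mhtf} is not proved here at all — it is imported from \cite{cv} — and the only in-paper derivation is the Remark after Theorem \ref{discreteMHTF2019}, which recovers the MHTF as the special case $G=Z$ of the KSMH formula; the proof machinery behind that formula (visible in the Appendix proof of Theorem \ref{BeyondIrreducibleTheorem}) passes through the induced QMC, the first-step operator $L=K-(K-D)\Phi$, Hunter's solvability theorem for $FXB=C$, and g-inverse algebra. You instead argue directly at the level of the channel $T$, with no QMC enlargement and no generalized-inverse theory. What your approach buys is brevity and transparency: it isolates exactly where irreducibility enters, namely the existence of $Z=(I-T+\Omega)^{-1}$ (uniqueness of the fixed point makes $I-T$ bijective on the traceless sector) and the trace-conservation identity $\langle e_I|(I-\mathbb{Q})H=\langle e_I|$ coming from $\pi(\phi\to V)=1$. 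What the paper's route buys is uniformity: the g-inverse formulation is precisely what survives when irreducibility is dropped in Theorem \ref{BeyondIrreducibleTheorem}, where your $Z$ ceases to exist and must be replaced by the group inverse. Three minor points: $\sigma$ is traceless Hermitian, not a state, so calling it an ``auxiliary state'' is a slip of terminology only; the functional identity $\langle e_I|(I-\mathbb{Q})H=\langle e_I|$ holds on all of $M_n$ because rank-one projectors span $M_n$ — worth one explicit line; and $\Tr(Z X)=\Tr(X)$ is obtained most directly from $\Omega Z=\Omega$, which avoids your appeal to the surjectivity of $I-T$ onto the traceless sector (true under irreducibility via the trivial Jordan structure at the eigenvalue $1$, Theorem \ref{wolfproposition}, but an extra step as you wrote it).
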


It is worth recalling that the above result is also valid if the initial state and goal subspace are not orthogonal, and this is obtained in terms of a conditioning on the first step: the mean time equals 1 plus the mean time to reach $V$ given that we are at the complement of $V$ after one step, see [\cite{cv}, Lemma 4.6 and Thm. 4.7].

\medskip

In a related construction, the KSMH formula makes use of generalized inverses in order to produce hitting time expressions. In order to state it properly, first we review a lemma proven in \cite{oqwmhtf2} in the setting of QMCs. The result is inspired by Hunter \cite{hunter} and provides a characterization for any g-inverse.

	\begin{lemma}\cite{hunter,oqwmhtf2}\label{generalg}
		Let $\Phi$ be an irreducible QMC on a finite graph with stationary density $w$. Let $\ket t, \ket u \in \mathbb C^{nk^2}$ be such that $\bracket{e_I}{t}  \neq 0$ and $\bracket{u}{w} \neq 0$. Then any g-inverse of $I-\Phi$ can be written as:
		\begin{align*}
			G=\big(I-\lceil \Phi\rceil +\ket t \bra u\big)^{-1} +\ket w \bra f + \ket g \bra{e_I},
		\end{align*}
		where $\ket f, \ket g$ are arbitrary vectors.
	\end{lemma}
	
\medskip

Regarding the hitting time operator for QMCs, we follow the notation given in \cite{oqwmhtf2} for generating functions. Namely, we write an expression that resembles (\ref{gf11}) for positive maps, by defining for any vertices $i, j$, 
\beq\label{gf12}
\mathbb{G}_{ij}(z) = 
\sum_{r\geq 1} \mathbb{P}_iT(\mathbb{Q}_iT)^{r-1}\mathbb{P}_jz^r = 
z\mathbb{P}_iT(I-z\mathbb{Q}_iT)^{-1}\mathbb{P}_j, 
\qquad |z|<1,
\eeq
where $\mathbb{P}_i$ is the projection on the $i$-th vertex of the graph associated with the QMC, and $\mathbb{Q}_i$ is the projection onto the vertices distinct from $i$. Then, we define $K:=(K_{ij})$, where
\beq\label{kqmcdef} 
K_{ij}=\lim_{x\uparrow 1} \frac{d}{dx} x\mathbb{G}_{ij}(x).\eeq
With such notation, we may state the following:

\medskip

	\begin{theorem}\label{discreteMHTF2019}\cite{oqwmhtf2} {\bf (KSMH formula for irreducible QMCs).}
		Let $\Phi$ be an irreducible QMC on a finite graph with $n \geq 2$ vertices and $k\geq 2$ internal degrees of freedom,  let $\Omega$ be its fixed map. Let $K=(K_{ij})$ denote the matrix of mean hitting time operators to vertices $i=1,\dots,n$, $D=K_d=\diag(K_{11},\dots,K_{nn})$. Let $G$ any g-inverse of $I-\Phi$, and let $E$ denote the block matrix for which each block equals the identity of order $k^2$. (a) The mean hitting time for the walk to reach vertex $i$, beginning at vertex $j$ with initial density $\rho_j$ is given by
		\begin{align*}
			\tau(\rho_j\otimes|j\rangle \rightarrow |i\rangle) = \Tr(K_{ij}\rho_j) = \Tr\Big(\big[D\big(\Omega G - (\Omega G)_dE +I-G+G_dE\big)\big]_{ij}\rho_j\Big),\;\;\;\textrm{ for all } i,j=1,\dots,n.
		\end{align*}
		
		(b) By setting $G= (I-\lceil\Phi\rceil+|u\rangle\langle e_I|)^{-1}+|f\rangle \langle e_I|$, with $|f\rangle$ arbitrary, and $|u\rangle$ such that $\langle e_I|u\rangle \neq 0$, we have that for every vertex $i$ and initial density $\rho_j$ on vertex $j$, 
		\begin{align}\label{hunterf11}
			\tau(\rho_j\otimes |j\rangle \rightarrow |i\rangle) = \Tr(K_{ij}\rho_j)=\Tr\Big(\big[D(I-G+G_dE)\big]_{ij}\rho_j\Big),\;\;\;\textrm{ for all } i,j=1,\dots,n.
		\end{align}
	\end{theorem}

Expression (\ref{hunterf11}) above should be seen as a quantum version of the classical equation (\ref{ht1}) stated in Section \ref{intro2}. We note that in reference \cite{oqwmhtf2}, the above result is called Hunter's formula for QMCs, due to the proximity of the proof of the result with the one presented in \cite{hunter} (also see the remarks right after equation (\ref{ht1})). For simplicity, in this work we will focus on generalizing expression (\ref{hunterf11}). We call the operator in brackets inside the trace in (\ref{hunterf11}) the  {\bf KSMH kernel} associated with the hitting time formula.

\medskip

\begin{remark} The fundamental matrix $Z$ is a particular generalized inverse that falls under the conditions specified by the above theorem, therefore we must have that the mean hitting time to reach vertex $i$ from $j$ with initial density $\rho_j$ can be calculated in terms of such inverse. We have:
	\begin{align}\label{relationmhtf}
		\Tr\big(K_{ij}\rho_j\big) &= \Tr\Big(\big[D\big(I-G+G_dE\big)\big]_{ij}\rho_j\Big) \nonumber\\
		&= \Tr\big(D_{ij}\rho_j\big)-\Tr\big((DZ)_{ij}\rho_j\big)+\Tr\Big(\big(D(Z_dE)\big)_{ij}\rho_j\Big).
	\end{align}
Now, note that 
	\begin{align*}
		\big(D(Z_dE)\big)_{ij}&=\sum_{l=1}^n\sum_{m=1}^nD_{il}(Z_d)_{lm}E_{mj} = \sum_{l=1}^nD_{il}Z_{ll}\\
		&=K_{ii}Z_{ii} = (DZ)_{ii},
	\end{align*} so we can substitute this on equation (\ref{relationmhtf}) and rearrange terms to obtain 
	\begin{align*}
		&\Tr\big(K_{ij}\rho_j\big)-\Tr\big(D_{ij}\rho_j\big) = -\Tr\big((DZ)_{ij}\rho_j\big)+\Tr\big((DZ)_{ii}\rho_j\big)\\
		\Longrightarrow & \Tr\big((K-D)_{ij}\rho_j)=\Tr \big(\big[(DZ)_{ii}-(DZ)_{ij}]\rho_j\big),
	\end{align*} which is the formula given by Theorem \ref{mhtf} when $i\neq j$. 
	
	\end{remark}

	\section{QMC induced by a quantum channel}\label{sec5}
	
Let $T:M_n\to M_n$ be any quantum channel. For $V\subset \mathbb{C}^n$ subspace, let $P\in M_n$ be the orthogonal projection onto $V$ and $Q=I_n - P$. We define the operators $\mathbb P$ and $\mathbb Q$ on $M_n$ by $\mathbb PX = PXP$ and $\mathbb QX = QXQ$, respectively, for $X\in M_n$. For instance, we can take $\psi$ to be any pure state, so that $V$ is the 1-dimensional subspace spanned by $\psi$, which implies that $P=|\psi\>\<\psi|$. 

\medskip

%Now, it is important to note that $\mathbb{P}$ and $\mathbb{Q}$, acting on the space of matrices, do not add up to the %identity. Instead, we have that $\mathbb{P}+\mathbb{Q}+\mathbb{R}=I$, where $\mathbb{R}X=PXQ+QXP$, which is an %orthogonal projection onto the traceless matrices. These projections allow us to write the following representation for $T$:
%\beq\label{fullrep}T=\begin{bmatrix} \mathbb{P}T\mathbb{P} & \mathbb{P}T\mathbb{Q} & \mathbb{P}T\mathbb{R} \\
%\mathbb{Q}T\mathbb{P} & \mathbb{Q}T\mathbb{Q} & \mathbb{Q}T\mathbb{R} \\
%\mathbb{R}T\mathbb{P} & \mathbb{R}T\mathbb{Q} & \mathbb{R}T\mathbb{R}\end{bmatrix}.\eeq
Regarding the problem of obtaining a KSMH formula for $T$, we propose the following idea: we form a new map $\Lambda=\Lambda_{T,V}$ dependent on both the quantum channel $T$ and the arrival subspace $V$, given by the following matrix of operators,
	\begin{align}\label{LambdaChannel}
		\Lambda=\Lambda_{T,V} :=\begin{bmatrix} \mathbb{P}T & \mathbb{P}T \\ \mathbb{Q}T & \mathbb{Q}T \end{bmatrix}.
	\end{align}  
	
	%This is obtained from (\ref{fullrep}) by discarding the part which involves projection $\mathbb{R}$ (appearing in the %third row and column). 

	 Note that to each of the subspaces $V$ and $V^\perp$ on which $T$ acts we have a site correspondence with respect to $\Lambda$: as in the first row we have the projection $\mathbb{P}$ appearing on the left of $T$, site 1 corresponds to $V$ and vectors of the form $[ X \; 0]^T$; analogously, in the second row $\mathbb{Q}$ appears on the left of $T$, so site 2 corresponds to $V^\perp$ and vectors of the form $[0\; X]^T$, for $0, X\in M_n$. 
	 
	 %Let $\mathbb P_1, \mathbb P_2$ be the projectors on the first and second component, that is, such that 
	%\begin{align*}
		%\mathbb P_1 \begin{bmatrix}
			%X_1 \\ X_2 \end{bmatrix} = \begin{bmatrix}
			%X_1 \\ 0 \end{bmatrix}\quad \text{and}\quad \mathbb P_2 \begin{bmatrix}
			%X_1 \\ X_2 \end{bmatrix} = \begin{bmatrix}
			%0\\ X_2 \end{bmatrix},
	%\end{align*} and let $\mathbb Q_i = I-\mathbb P_i$, $i=1,2$.
	
	\medskip
	
	We also note that although $\Lambda$ is not equal to operator $T$, it will be an essential element for what follows. The importance of such map is explained by the following lemma:

	\begin{lemma} Let $T:M_n\rightarrow M_n$ be a quantum channel, $V\subset \mathbb C^n$ a subspace. Let $P$ be the orthogonal projector onto $V$, let $Q=I-P$ and $\mathbb Q:= Q \cdot Q$. Let $\Lambda$ be defined as in (\ref{LambdaChannel}). We have:
		\begin{enumerate}
					\item The mean hitting time for $T$ to reach subspace $V$ starting from a state $\phi \in V^\perp$ is the same as the mean hitting time for $\Lambda$ starting at site $|2\rangle$, with initial density $\rho_\phi=|\phi\rangle \langle \phi|$, to reach state $|1\rangle$.
			\item $\Lambda$ is a QMC on two vertices. In particular, it is a positive, trace preserving map.
		\end{enumerate}
	\end{lemma}	
	\begin{proof} 1. We have, noting that $\mathbb{Q}_1=\mathbb{P}_2$ (the complement of vertex 1 equals vertex 2),
	\begin{align}\label{meanhittingtimescoincide}
		\nonumber \tau_\Lambda(\rho_\phi\otimes|2\rangle \rightarrow |1\rangle)&:= \sum_{r\geq 1}r\Tr(\mathbb P_1 \Lambda (\mathbb Q_1\Lambda)^{r-1}\mathbb{P}_2\rho_\phi)\\
		\nonumber  &=\sum_{r\geq 1} r\Tr(\mathbb P_1\Lambda \mathbb{P}_2(\mathbb P_2 \Lambda \mathbb P_2)^{r-1} \rho_\phi)\\
		 &= \sum_{r \geq 1} r\Tr(\mathbb P T(\mathbb QT)^{r-1}\rho_\phi)=\tau_T(\phi\rightarrow V),\\ \nonumber
	\end{align}	
	which is the mean hitting time for the quantum channel, provided the condition $\pi(\phi\rightarrow V)=1$. 
	
	\medskip
	
	2. Map $\Lambda$, is a block matrix where each element is a positive operator, so it corresponds to a positive map acting on a direct sum of matrices. Moreover, as $T$ is trace preserving, we have that
	$$\rho=\begin{bmatrix} \rho_1 \\ \rho_2 \end{bmatrix}\;\Longrightarrow\; \Lambda(\rho)=\begin{bmatrix} \mathbb{P}T & \mathbb{P}T \\ \mathbb{Q}T & \mathbb{Q}T\end{bmatrix}\begin{bmatrix} \rho_1 \\ \rho_2 \end{bmatrix}=\begin{bmatrix} \mathbb{P}T(\rho_1+\rho_2) \\ \mathbb{Q}T(\rho_1+\rho_2) \end{bmatrix}$$
	$$\Longrightarrow \mathrm{Tr}(\Lambda(\rho))=\mathrm{Tr}(\mathbb{P}T(\rho_1+\rho_2))+\mathrm{Tr}(\mathbb{Q}T(\rho_1+\rho_2))=\mathrm{Tr}\Big[(\mathbb{P}+\mathbb{Q})T(\rho_1+\rho_2)\Big]$$
	$$=\mathrm{Tr}(T(\rho_1+\rho_2))=\mathrm{Tr}(T(\rho_1))+\mathrm{Tr}(T(\rho_2))=\mathrm{Tr}(\rho_1)+\mathrm{Tr}(\rho_2)=\mathrm{Tr}(\rho),$$	
	noting that, for the third trace equality above, $\mathbb{P}+\mathbb{Q}=I-\mathbb{R}$, where $\mathbb{R}$ is the projection onto the space of traceless matrices. Therefore, $\Lambda$ is a QMC on two vertices.
	
	\end{proof}
	
As stated by the lemma above, map $\Lambda$ is in fact a QMC: we call it the {\bf QMC induced by channel $T$ and subspace $V$}. Once again, the above reasoning also applies to any positive, trace preserving map.

\section{Beyond the irreducible case: using the group inverse}\label{sec7}

Our review of results for irreducible QMCs, together with the notion of QMC induced by a quantum channel leads us to some  questions. On one hand, if for a given channel one is able to ensure that the induced QMC is irreducible, then one has
Theorem \ref{discreteMHTF2019}, so that hitting times can be calculated with generalized inverses. However, irreducibility may be quite a strong request, generally speaking. One of our main goals in this work is to find situations for which one is able to dismiss irreducibility, by considering a weaker condition. In the irreducible case, the existence of generalized inverses for QMCs is well-known. With this in mind, we ask the following natural question: are there g-inverses that can be obtained even in the reducible case? We have a positive answer to this, which is in fact given by the group inverse (this is the content of Proposition \ref{groupinverseexists}). Below we state our main result, for which we do not assume irreducibility, and in fact is valid for any positive, trace preserving map. The proof can be seen in the Appendix.
	
	\begin{theorem}\label{BeyondIrreducibleTheorem}
		Let $T: M_n\to M_n$ be a positive, trace preserving map and let $V$ be a goal subspace. Let $\Lambda=\Lambda_{T,V}$ be the QMC induced by $T$ and $V$, let $P$ be the orthogonal projection onto $V$, $Q=I-P$ and let $\mathbb{Q}=Q\cdot Q$. Suppose that $\Lambda$ and $V$ satisfy Assumption I. Let $A=I-\Lambda$, let $A^\#$ denote its group inverse and write $A_d^\#=diag((A^\#)_{11},(A^\#)_{22})$. Let $D=\diag(K_{11},K_{22})$, where $K$ is the  mean hitting time map (\ref{HK}) with respect to $V$, and let $E$ denote the block matrix for which each block equals the identity.	Then, the mean time to reach subspace $V$ under the action of $T$ is given by
		$$\tau(\rho\to V)=\mathrm{Tr}\Bigg(\Big[D(I-A^\#+A_d^\#E)\Big]_{1j}\rho\Bigg),$$
		where if $j=1$, we assume $\rho$ is any pure state in $V$  (mean recurrence time to $V$ given $\rho$), or if $j=2$, then $\rho$ is any pure state in $V^\perp$.
	\end{theorem}
	
	In a similar way as for Theorem \ref{mhtf}, this result is also valid for the case that the initial state and the goal subspace are not orthogonal, see \cite{cv}.
	
	\medskip
	
	The main conclusion that can be drawn from the above theorem goes in line with previous results on irreducible QMCs, namely, that the mean hitting time from a state to a goal subspace is essentially determined by the mean return times given by the diagonal of $K$, with some generalized inverse performing an auxilliary role in obtaining the analytic expression.
	
	\bigskip
	
{\bf Justifying the use of Assumption I.} Let us conclude this section by explaining how Assumption I has appeared in our study. First, we recall the classical expression on the hitting time $k_{ij}$ from vertex $j$ to vertex $i$ \cite{kemeny},
	$$k_{ij}=1+\sum_{l\neq i}p_{lj}k_{il},$$
	the so-called conditioning of the first step: the mean time equals 1 plus the weighted average of the probabilities $p_{lj}$ of moving from $j$ to some other state $l\neq i$ multiplied by the mean time of moving from $l$ to $i$, that is, $k_{il}$. If $P=(p_{ij})$, $K=(k_{ij})$, $D=diag(K)$, and $E$ the matrix whose entries are all equal to 1, this can be written in matrix form as
$$K=E+(K-D)P \;\Rightarrow\; K-(K-D)P=E.$$
Now turning to our setting, if $\Lambda$ is a QMC, one has a quantum version of the above expression, namely, one defines the operator
$$L=K-(K-D)\Lambda,$$
where $K$ and $D$ are the corresponding operators obtained from $\Lambda$, and it is known that, for every choice of density $\rho$ concentrated at $j$ we have $\mathrm{Tr}(L_{ij}\rho)=1$, where $L_{ij}$ is the operator (block matrix) at position $(i,j)$, see \cite{c2017,cv}. 

\medskip

Now, note that operator $L$ makes sense whenever each of the hitting time operators $K_{ij}$ produce finite hitting times for every choice of density (actually one only needs $K_{11}$ and $K_{12}$ to be finite). How to ensure such finiteness? Here we have the crucial step: in \cite{glv}, it is proven that if 1 does not belong to the spectrum of $\mathbb{Q}\Lambda$, we have the analyticity of the generating function $\mathbb{G}(z)=z(I-z\mathbb{Q}\Lambda)^{-1}$ at $z=1$, which implies the finiteness of the mean hitting times for any initial state, besides ensuring that the probability of ever reaching the goal subspace equals 1, also see [\cite{cv}, Remark 3.1]. This provides a global condition that allows us to write operator $L$ and to obtain hitting time formulae in terms of a kernel which is well-defined for the entire space. The fact that we are able to write a well-defined operator $L$ is at the core of the proof of Theorem \ref{BeyondIrreducibleTheorem}. As it becomes clear in the next section, there are reducible examples which also satisfy this spectral condition, thus extending previous results seen in the literature.

\section{Examples}

\medskip

\subsection{Quantum channel on 1 qubit} The following example was studied in \cite{cv}, and here we review it under the results of this article.

\bex
	 Consider the quantum channel $T$ acting on $M_2(\mathbb C)$ given by $T(X) = AXA^*+BXB^*$, where 
	\begin{align*}
		A=\frac{1}{\sqrt{3}}\begin{bmatrix}
			1 & 1\\ 0 & 1\end{bmatrix}, \quad B=\frac{1}{\sqrt{3}}\begin{bmatrix}
			1 & 0\\ -1 & 1\end{bmatrix}.
	\end{align*}
	We have that $T$ is trace preserving and unital (i.e., it is identity-preserving). The matrix representation of the channel is given by 
	\begin{align*}
		\ceil T =A\otimes \Bar{A}+B\otimes\Bar{B}= \frac{1}{3}\begin{bmatrix}
			2 & 1 & 1 & 1 \\
			-1 & 2 & 0 & 1 \\
			-1 & 0 & 2 & 1 \\
			1 & -1 & -1 & 2
		\end{bmatrix}.
	\end{align*}
We choose two  orthogonal states,
	\begin{align*}
		\phi = \frac{1}{\sqrt{2}}\begin{bmatrix}
			1 \\ -1 \end{bmatrix}, \quad \psi = \frac{1}{\sqrt{2}}\begin{bmatrix}
			1 \\1\end{bmatrix},
	\end{align*} and define their projections,
	\begin{align*}
		P=|\psi\rangle \langle \psi| = \frac{1}{2}\begin{bmatrix}
			1 & 1 \\ 1 & 1\end{bmatrix}, \quad Q = |\phi\rangle \langle \phi| = \frac{1}{2}\begin{bmatrix}
			1 & -1 \\ -1 & 1\end{bmatrix}.
	\end{align*}
	With these, the orthogonal projector maps acting on $M_2$ given by $\mathbb P = P\cdot P$ and $\mathbb Q = Q\cdot Q$ have matrix representations
	\begin{align*}
		\ceil {\mathbb P} =P\otimes \Bar{P}= \frac{1}{4}\begin{bmatrix}
			1 & 1 & 1 & 1 \\
			1 & 1 & 1 & 1 \\
			1 & 1 & 1 & 1 \\
			1 & 1 & 1 & 1
		\end{bmatrix}, \quad \ceil {\mathbb Q} =  Q\otimes \Bar{Q}=\frac{1}{4} \begin{bmatrix}
			1 & -1 & -1 & 1 \\
			-1 & 1 & 1 & -1 \\
			-1 & 1 & 1 & -1 \\
			1 & -1 & -1 & 1
		\end{bmatrix}.
	\end{align*}
	\
	
	The matrix representation of the mean hitting time operator for $T$ is given by
	\begin{align*}
		\ceil K = \ceil T(I_4-\ceil{\mathbb Q}\ceil T)^{-2}=\frac{1}{6}\begin{bmatrix}
			39 & -12 & -12 & 9 \\
			-72 & 32 & 28 & -12 \\
			-72 & 28 & 32 & -12 \\
			177 & -72 & -72 & 39
		\end{bmatrix}.
	\end{align*}
	
	With this, we can calculate $\tau(\phi\rightarrow V)$, where $V=\ger \{\psi\}\subset \mathbb C^2$, and $\rho_\phi =|\phi\rangle \langle \phi|=Q$. We have 
	\begin{align}\label{exampleQC}
		\tau(\phi\rightarrow V) = \Tr(\mathbb PK\rho_\phi)= 6.
	\end{align}
	
	Now, alternatively, one may turn to the KSMH formula for hitting times in order to verify the result above. First, we define the QMC $\Lambda$ via equation (\ref{LambdaChannel}), whose matrix representation is 
	\begin{align*}
		\ceil \Lambda = \begin{bmatrix}
			\ceil{\mathbb P}\ceil T & \ceil{\mathbb P}\ceil T \\
			\ceil{\mathbb Q}\ceil T & \ceil{\mathbb Q}\ceil T
		\end{bmatrix} =\frac{1}{12} \begin{bmatrix}
		1 & 2 & 2 & 5 & 1 & 2 & 2 & 5\\
		1 & 2 & 2 & 5 & 1 & 2 & 2 & 5\\
		1 & 2 & 2 & 5 & 1 & 2 & 2 & 5\\
		1 & 2 & 2 & 5 & 1 & 2 & 2 & 5\\
		5 & -2 & -2 & 1 & 5 & -2 & -2 & 1\\
		-5 & 2 & 2 & -1 & -5 & 2 & 2 & -1\\
		-5 & 2 & 2 & -1 & -5 & 2 & 2 & -1\\
		5 & -2 & -2 & 1 & 5 & -2 & -2 & 1
		\end{bmatrix}.		
	\end{align*}	
In our case, matrix $E$ equals
\beq\label{e224}
		E = \begin{bmatrix} I_4 & I_4 \\ I_4 & I_4\end{bmatrix}.
\eeq
	Then, we have $A=I-\Lambda$,
	$$A^\#=\frac{1}{4}\begin{bmatrix} 
	1 & 2 & 2 & 1 & -3 & 2 & 2 & 1\\
	-3 & 6 & 2 & 1 & -3 & 2 & 2 & 1\\
	-3 & 2 & 6 & 1 & -3 & 2 & 2 & 1 \\
	-3 & 2 & 2 & 5 & -3 & 2 & 2 & 1 \\
	1 & -2 & -2 & -3 & 5 & -2 & -2 & -3 \\
	-1 & 2 & 2 & 3 & -1 & 6 &2 & 3\\
	-1 & 2 & 2 & 3 & -1 & 2 & 6 & 3\\
	1 & -2 & -2 & -3 & 1 & -2 & -2 & 1
	\end{bmatrix},$$
	so that
	\begin{align*}
		\tau(\phi \rightarrow V) &= \Tr\big(\big[D(I-A^\#+A_d^\#E)\big]_{12}\rho_\phi\big) \\
		&= \Tr \left(
		 \begin{bmatrix} 3 & -1 & -1 & 1\\
			3 & -1 & -1 & 1\\
			3 & -1 & -1 & 1\\
			3 & -1 & -1 & 1
		\end{bmatrix}\cdot \begin{bmatrix} 1/2 \\ -1/2 \\ -1/2 \\ 1/2 \end{bmatrix} \right)  = \Tr\left( \begin{bmatrix}
			3 \\ 3 \\ 3\\ 3
		\end{bmatrix}\right) = 3+3 = 6,
	\end{align*}
	which is the same result we obtained above in (\ref{exampleQC}), as expected.

\eex\qee

\medskip

\subsection{Unitary walks}\label{sec8}

In order to study hitting times for unitary dynamics in a consistent manner, first we need to ensure that the probability of ever reaching a goal subspace equals 1 for some initial state. Given a unitary matrix $U$, we consider the quantum channel induced by it, namely the conjugation map $\mathbb{U}:M_n\to M_n$, $\mathbb{U}(X)=UXU^*$. Then, as its representation matrix is $U\otimes\ov{U}$, the value 1 will be a degenerate eigenvalue for $\mathbb{U}$, so the conjugation map induced by a unitary matrix will be reducible. Nevertheless, as we have seen in this work, a proper assumption on $\mathbb{Q}\mathbb{U}$, where $\mathbb{Q}$ is the projection onto the complement of the target space, allows us to write hitting time formulae in terms of the associated group inverse of the walk.

\bex (Order 2 unitary matrix) Let $U\in M_2$ be the Hadamard matrix and $\mathbb{U}$ the associated conjugation,
$$U=\frac{1}{\sqrt{2}}\begin{bmatrix} 1 & 1 \\ 1 & -1\end{bmatrix},\;\;\;\mathbb{U}(X)=UXU^*\;\Rightarrow\; \lceil\mathbb{U}\rceil=U\otimes\ov{U}=\frac{1}{2}\begin{bmatrix} 1 & 1 & 1 & 1 \\ 1 & -1 & 1 & -1 \\ 1 & 1 & -1 & -1 \\ 1 & -1 & -1 & 1  \end{bmatrix}.$$
It holds that $U$ has a simple spectrum, but $\mathbb{U}$ has 1 as an eigenvalue with multiplicity 2 (two strictly positive eigenstates are easily obtained, one of which is a multiple of the identity). As a consequence, $\mathbb{U}$ is reducible. Nevertheless, there are many subspaces for which one may calculate hitting times in terms of g-inverses. For instance, if we consider any state of the form 
$$\psi=\begin{bmatrix}\alpha \\ \sqrt{1-\alpha^2}\end{bmatrix},\;\;\;0\leq \alpha\leq 1,\;\;\;\alpha\neq \frac{1}{2}\sqrt{2+\sqrt{2}}, \;\;\; P=|\psi\rangle\langle\psi|,\;\;\;Q=I-P,\;\;\;\mathbb{P}=P\cdot P,\;\;\;\mathbb{Q}=Q\cdot Q,$$
then $1$ is not in the spectrum of $\mathbb{Q}\mathbb{U}$. For a particular case, take $\alpha=1$ so that $\psi=[1 \; 0]^T$, and let $\phi=[0\; 1]^T$. A straighforward calculation gives that
$$K=\ceil{\mathbb{U}}(I-\ceil{\mathbb{Q}}\ceil{\mathbb{U}})^{-2}=\begin{bmatrix} 2 & -1 & -1 & 2 \\-1 & 1 & 2 & -2 \\ -1 & 2 & 1 & -2 \\2 & -2 & -2 & 2\end{bmatrix},$$
so
$$\tau(\phi\to \psi)=\mathrm{Tr}(\ceil{\mathbb{P}}\ceil{\mathbb{U}}(I-\ceil{\mathbb{Q}}\ceil{\mathbb{U}})^{-2}(\rho_\phi))=2.$$
In terms of Theorem \ref{BeyondIrreducibleTheorem}, the QMC induced by $\mathbb{U}$ and $V=span\{\psi\}$ is
$$\Lambda=\frac{1}{2}\begin{bmatrix}
1 & 1 & 1 &1 & 1 & 1 & 1 & 1\\
0 & 0 & 0 & 0 & 0 & 0 & 0 & 0 \\
0 & 0 & 0 & 0 & 0 & 0 & 0 & 0 \\
0 & 0 & 0 & 0 & 0 & 0 & 0 & 0 \\
0 & 0 & 0 & 0 & 0 & 0 & 0 & 0 \\
0 & 0 & 0 & 0 & 0 & 0 & 0 & 0 \\
0 & 0 & 0 & 0 & 0 & 0 & 0 & 0 \\
1 & -1 & -1 &1 & 1 & -1 & -1 & 1
\end{bmatrix},$$
the group inverse associated with $\Lambda$ is 
$$A=I-\Lambda,\;\;\;A^\#=\frac{1}{2}\begin{bmatrix} 1 & 1 & 1 & -1 & -1 & 1 & 1 & -1\\
0 & 2 & 0 & 0 & 0 & 0 & 0 & 0 \\
0 & 0 & 2 & 0 & 0 & 0 & 0 & 0 \\
0 & 0 & 0 & 2 & 0 & 0 & 0 & 0 \\
0 & 0 & 0 & 0 & 2 & 0 & 0 & 0 \\
0 & 0 & 0 & 0 & 0 & 2 & 0 & 0 \\
0 & 0 & 0 & 0 & 0 & 0 & 2 & 0 \\
-1 & -1 & -1 & -1 & -1 & -1 & -1 & 1
\end{bmatrix},$$
so the KSMH kernel can be obtained and provides, as expected,
$$D[I-A^\#+A_d^\#E]_{12}=\begin{bmatrix} 
 2 & -1 & -1 & 2\\
 0 & 0 & 0 & 0 \\
 0 & 0 & 0 & 0 \\
 0 & 0 & 0 & 0 \end{bmatrix}\;\Rightarrow\; \tau(\phi\to \psi)=\mathrm{Tr}\Bigg(\Big[D(I-A^\#+A_d^\#E)\Big]_{12}\rho_\phi\Bigg)=2.$$	

\eex
\qee

\bex (Order 4 unitary matrix) Let
$$U=\frac{1}{\sqrt{2}}\begin{bmatrix} 1 & 1 & 0 & 0 \\ 0 & 0 & 1 & 1 \\ 1 & -1 & 0 & 0 \\ 0 & 0 & 1 & -1\end{bmatrix},\;\;\;\mathbb{U}(X)=UXU^*,$$
which corresponds to a coined walk on two vertices, and let
$$|\psi\rangle=\begin{bmatrix} 1 \\ 0 \\0 \\0\end{bmatrix},\;\;\; |\phi\rangle=\begin{bmatrix}0\\ \alpha\\ \beta \\ \delta\end{bmatrix},\;\;\;|\alpha|^2+|\beta|^2+|\delta|^2=1,\;\;\;P=|\psi\rangle\langle\psi|,\;\;\;Q=I-P,\;\;\;\mathbb{P}=P\cdot P, \;\;\;\mathbb{Q}=Q\cdot Q.$$
Although relatively large, a computer-aided computation produces the results  in a straightforward manner, both in terms of the definition of hitting times acting on the order 16 representation matrix of $\mathbb{U}$ and also in terms of the induced QMC (order 32 matrix) with Theorem \ref{BeyondIrreducibleTheorem}. In this case the group inverse is easily obtained as well (the individual entries are all simple fractions, but we omit the explicit matrices for brevity). As 1 is not in the spectrum of $\mathbb{Q}\mathbb{U}$, we can write the hitting time operator as $K=\mathbb{U}(I-\mathbb{Q}\mathbb{U})^{-2}$. Explicitly, we have
$$K=\begin{bmatrix} B_1 & B_2 \\ B_3 & B_4\end{bmatrix},$$
where, for simplicity, $K$ is written as the 4 order 8 matrices $B_i$ given by
$$B_1=\begin{bmatrix} 
4 & -3 & -1 & 2 & -3 & 4 & 1  & -2\\
-1 & 1 & -1 & -5 & 1 & -1 & 2 & 6 \\
-3 & 3 & 1 & -2 & 4 & -4 & -1 & 2 \\
2 & -2 & -5 & 8 & -2 & 2 & 6 & -9 \\
-1 & 1 & 6 & -3 & 1 & -1 & -6 & 3\\
6 & -6 & -2 & 4 & -6 &  6 & 2 & -4 \\
1 & -1 & -6 & 3 & -1 & 1 & 6 & -3 \\
-3 & 3 & 4 & -13 & 3 & -3 & -4 & 13
\end{bmatrix},\;\;\;B_2=\begin{bmatrix}
-1 & 1 & 6 & -3 & 2 & -2 & -3 & 10 \\
6 & -6 & -2 & 4 & -3 & 3 & 8 & -12\\
1 & -1 & -6 & 3 & -2 & 2 & 3 & -10\\
-3 & 3 & 4 & -13 & 10 & -10 & -12 & 25\\
-1 & 2 & -2 & 8 & -5 & 6 & 4 & -12\\
-2 & 2 & 10 & -6 & 4 & -4 & -6 & 18\\
2 & -2 & 2 & -8 & 6 & -6 & -4 & 12\\
8 & -8 & -6 & 16 & -12 & 12 & 18 & -34
\end{bmatrix},$$
$$B_3=\begin{bmatrix}
-3 & 4 & 1 &-2 & 3 & -4 & -1 & 2\\
1 & -1 & 2 & 6 & -1 & 1 & -2 & -6\\
4 & -4 & -1 & 2 & -4 & 4 & 1 & -2\\
-2 & 2 & 6 & -9 & 2 & -2 & -6 & 9\\
2 & -2 & -3 & 10 & -2 & 2 & 3 & -10\\
-3 & 3 & 8 & -12 & 3 & -3 & -8 & 12\\
-2 & 2 & 3 & -10 & 2 & -2 & -3 & 10 \\
10 & -10 & -12 & 25 & -10 & 10 & 12 & -25
\end{bmatrix},\;\;\;B_4=\begin{bmatrix}
1 & -1 & -6 & 3 & -2 & 2 & 3 & -10\\
-6 & 6 & 2 & -4 & 3 & -3 & -8 & 12\\
-1 & 1 & 6 & -3 & 2 & -2 & -3 & 10\\
3 & -3 & -4 & 13 & -10 & 10 & 12 & -25\\
-5 & 6 & 4 & -12 & 8 & -9 & -13 & 25\\
4 & -4 & -6 & 18 & -13 & 13 & 16 & -34\\
6 & -6 & -4 & 12 & -9 & 9 & 13 & -25\\
-12 & 12 & 18 & -34 & 25 & -25 & -34 & 72
\end{bmatrix},$$
from which we obtain 
$$\tau(\phi\to\psi)=\mathrm{Tr}(\mathbb{P}K\rho_\phi)=4|\alpha|^2+6|\beta|^2+10|\delta|^2+2Re(\alpha\ov{\beta})-4Re(\alpha\ov{\delta})-6Re(\beta\ov{\delta}).$$
The particular case $\alpha=1$, $\beta=\delta=0$ (i.e., $\phi$ equals the canonical vector $e_2$) gives $\tau(e_2\to\psi)=4$ and, similarly, $\tau(e_3\to\psi)=6$ and $\tau(e_4\to\psi)=10$. Proper calculations lead us to apply Theorem \ref{BeyondIrreducibleTheorem}, producing the same result as above, as expected. 
\eex

\qee

\begin{remark} The reader should compare the notion of hitting times presented in this work with the one seen in \cite{portugal,szegedy}, associated with Szegedy's walk, for which such quantity is not defined in terms of monitoring of subspaces. A difference worth observing is that in Szegedy's walk, one usually takes the maximally mixed state as the initial state, which in general allows us to circumvent the problem of the degeneracy of the spectrum of the unitary operator. The discussion regarding whether such hitting time notions are equivalent in some sense remains, up to our knowledge, an open question.\end{remark}

\section{Summary}\label{sec9}

In this work, we have discussed the problem of obtaining mean times of first visit to some chosen goal subspace, given some initial state, under the dynamics of a quantum channel acting on a finite-dimensional Hilbert space. We have extended previous results on this matter by considering a condition on the channel which is strictly weaker than the assumption of irreducibility. As a consequence, now we are able to consider hitting times of unitary conjugations in terms of generalized inverses. The procedure can be briefly summarized as follows: let $T$ be a quantum channel, $V$ a goal subspace with associated orthogonal projection $P$, $Q=I-P$, $\mathbb{Q}=Q\cdot Q$ and let $\Lambda=\Lambda_{T,V}$ be the induced QMC. Then, if 1 does not belong to the spectrum of $\mathbb{Q}\Lambda$, one has a global condition that allows us to obtain hitting times to subspace $V$, with respect to the dynamics given by $T$, in terms of a generalized inverse of $I-\Lambda$, namely the group inverse, and an operator $D_{11}$ characterizing the mean return times for $V$. The explicit formula is given by Theorem \ref{BeyondIrreducibleTheorem}.

\medskip

A future research direction concerns the problem of generalizing the present results to the setting of quantum maps acting on infinite-dimensional vector spaces. In principle, one has to study generalized inverses while taking in consideration positive maps in such context. Regarding the case of unitary quantum walks on the line, we believe that a natural starting point consists of considering the CMV decomposition of unitary maps \cite{cmv}, which allows us to write the evolution as a direct sum of unitaries with a simple spectrum. A discussion of these matters will be the topic of a future work.

\bigskip

{\bf Acknowledgments.} The contents of this work have appeared, or are a refinement of results from the PhD thesis of one of the authors (LFLP), see \cite{lflp_tese}. The authors are grateful to the referees for several suggestions regarding the manuscript. CFL is would like to thank the organizers of the Graph Theory, Algebraic Combinatorics and Mathematical Physics Conference in Montr\'eal, Canada, during which a lecture on group inverses and quantum walks was presented, and to L. Vel\'azquez and R. Portugal, regarding discussions on quantum probabilities and hitting times in the unitary setting. LFLP acknowledges financial support from CAPES (Coordena\c c\~ao de Aperfei\c coamento de Pessoal de N\'ivel Superior - Brasil) during the period 2018-2021. This publication is part of the I+D+i project PID2021-124472NB-I00 funded by MCIN/AEI/10.13039/501100011033/ and ERDF “Una manera de hacer Europa”.

\medskip

\section{Appendix}

\subsection{Proof of Lemma \ref{mnov_lemma}.} The proof is inspired by [\cite{meyer-artigo}, Thm. 2.2], which is valid for stochastic matrices. For the setting of positive maps, we proceed as follows: first, use Theorem \ref{wolfproposition} in order to write
\begin{align*}
		\ceil T = X\begin{bmatrix}
			I &  \\  & B
		\end{bmatrix}X^{-1},
	\end{align*} where $I$ is an identity matrix of some dimension and $B$ is a matrix with no eigenvalue equal to 1. Then, write
	$$\frac{I+B+B^2+\cdots+B^{n-1}}{n}=\frac{(I-B^n)(I-B)^{-1}}{n}.$$
	From this, and with the factorizations
	$$A=X\begin{bmatrix} O & \\ & I-B\end{bmatrix}X^{-1},\;\;\;A^\#=X\begin{bmatrix} O & \\ & (I-B)^{-1}\end{bmatrix}X^{-1},$$	
	we obtain that
	$$\frac{I+T+\cdots+T^{n-1}}{n}=\frac{(I-T^n)A^\#}{n}+I-AA^\#.$$
	By Russo-Dye's theorem \cite{bhatia}, the positivity and trace preservation of $T$ implies that $\Vert T^n\Vert=1$ for all $n$. Therefore,
	$$\lim_{n\to\infty} \frac{(I-T^n)A^\#}{n}=0,$$
	and hence
$$\lim_{n\to\infty}\frac{I+T+\cdots+T^{n-1}}{n}=I-AA^\#.$$
Now, note that $(I-T)(I-A^\#A)=A(I-A^\#A)=A-AA^\#A=A-A=0$, which implies that $(I-A^\#A)\rho$ is invariant for $T$ for any choice of matrix $\rho$. Finally, if $\rho$ is a density matrix, so is $(I-A^\#A)\rho$. In fact, consider the sequence $T_n=(I+T+\cdots+T^{n-1})/n$, $n=1,2,\dots$, which approximates $(I-A^\#A)$ uniformly. We have that $T_n\rho$ is a density matrix for all $n$ and, as the set of density matrices in $M_n$ is compact, the result follows.

\qed

\subsection{Proof of Theorem \ref{BeyondIrreducibleTheorem}.} First, we recall a technical result described by Hunter \cite{hunter}:

	\begin{theorem}\cite{hunter}\label{theoremofginverses}
		A necessary and sufficient condition for the equation $FXB=C$ to have a solution is that $FF^-CB^-B=C$, where $F^-$ and $B^-$ are any g-inverses for $F$ and $B$, respectively. In this case, the general solution is given by one of the two equivalent forms: either
		\begin{align*}
			X=F^-CB^-+H-F^-DHBB^-,
		\end{align*} where $H$ is an arbitrary matrix, or
				\begin{align*}
			X=F^-CB^-+(I-F^-F)U+V(I-B^-B),
		\end{align*}
		where $U$ and $V$ are arbirtrary matrices.		
	\end{theorem}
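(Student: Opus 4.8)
The plan is to treat this as the classical consistency-and-parametrization problem for a two-sided matrix equation, relying only on the defining identities $FF^-F=F$ and $BB^-B=B$ (so that the argument is valid for \emph{any} choice of g-inverses $F^-,B^-$), together with the fact that $FF^-$, $F^-F$, $BB^-$, $B^-B$ are idempotents. I would split the argument into three parts: (i) necessity of the condition $FF^-CB^-B=C$; (ii) sufficiency, which simultaneously exhibits $X_0:=F^-CB^-$ as a particular solution; and (iii) the description of the full solution set as $X_0$ plus the general solution of the homogeneous equation $FXB=0$.

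For necessity I would assume a solution $X$ exists, so $FXB=C$, and substitute directly:
$$FF^-CB^-B = FF^-(FXB)B^-B = (FF^-F)X(BB^-B) = FXB = C,$$
where the two collapses use $FF^-F=F$ and $BB^-B=B$. For sufficiency I would assume $FF^-CB^-B=C$ and verify that $X_0=F^-CB^-$ works, since $FX_0B = FF^-CB^-B = C$ by hypothesis; this is the same computation read in the opposite direction, and it is precisely what makes $F^-CB^-$ the canonical particular solution appearing in both displayed forms.

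The substantive step is (iii), the parametrization of $\{Y : FYB=0\}$. I would first check that the first form is \emph{sound}: $F(H-F^-FHBB^-)B = FHB-(FF^-F)H(BB^-B)=FHB-FHB=0$ for every $H$. Then I would check it is \emph{complete}: given any homogeneous $Y$, multiplying $FYB=0$ on the left by $F^-$ and on the right by $B^-$ gives $F^-FYBB^-=0$, so $Y = Y-F^-FYBB^-$, which is the first form with $H=Y$. For the second form I would use the left and right projectors $I-F^-F$ and $I-BB^-$ (note $(I-BB^-)B=B-BB^-B=0$, which is why the right factor must be $I-BB^-$): soundness is $F(I-F^-F)UB=(F-FF^-F)UB=0$ and $FV(I-BB^-)B=FV(B-BB^-B)=0$. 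The equivalence of the two parametrizations, and hence completeness of the second, then follows from the single decomposition
$$H-F^-FHBB^- = (I-F^-F)H + F^-FH(I-BB^-),$$
which rewrites any first-form solution in second-form shape (take $U=H$, $V=F^-FH$) and, run the other way, shows every second-form element is a homogeneous solution expressible in first form.

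I expect the only real obstacle to be bookkeeping rather than conceptual: keeping the left identities (those flowing through $FF^-F=F$ and the projector $F^-F$) cleanly separated from the right identities (those flowing through $BB^-B=B$ and the projector $BB^-$), and being careful that the correct right-hand factor is $I-BB^-$ — the one annihilated on the right by $B$ — rather than $I-B^-B$. Once the two governing identities are in place, both displayed forms drop out of the decomposition above, and the equivalence of the two parametrizations is immediate.
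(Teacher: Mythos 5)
Your proof is correct, and there is in fact no in-paper argument to compare it against: Theorem \ref{theoremofginverses} is quoted from Hunter \cite{hunter} without proof, so your write-up serves as a self-contained verification of the cited result. Your three-part structure — necessity via the collapse $FF^-CB^-B=FF^-FXBB^-B=FXB=C$, sufficiency via the particular solution $X_0=F^-CB^-$, and the parametrization of the homogeneous set $\{Y: FYB=0\}$ — is the standard route, and every step checks out, including the completeness arguments (taking $H=Y$ for the first form) and the bridging identity $H-F^-FHBB^-=(I-F^-F)H+F^-FH(I-BB^-)$ that shows the two parametrizations sweep out the same set.

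One thing worth making explicit: you silently corrected two typos in the statement as printed. The term $F^-DHBB^-$ in the first form should read $F^-FHBB^-$ (there is no $D$ in this theorem; the letter leaks in from the hitting-time notation used elsewhere in the paper), and the right-hand factor in the second form must be $V(I-BB^-)$, not $V(I-B^-B)$, exactly for the reason you flag: $(I-BB^-)B=0$, whereas $(I-B^-B)B=B-B^-B^2$ need not vanish. A concrete witness: take $F=I$, $C=0$, $B=\begin{bmatrix}0&1\\0&0\end{bmatrix}$ with g-inverse $B^-=\begin{bmatrix}0&0\\1&0\end{bmatrix}$ and $V=I$; then $V(I-B^-B)B=B\neq 0$, so the printed form would admit non-solutions. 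Note that the paper's actual use of the theorem in the Appendix is unaffected, since there $B=A=I-\Phi$ and the g-inverse employed is the group inverse $A^\#$, which commutes with $A$, making $I-A^\#A=I-AA^\#$.
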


\medskip

Now we recall a basic lemma, regarding a conditioning on the first step reasoning. The probabilitistic reasoning given in terms of the operator $L$ has been discussed in \cite{c2017} in the context of OQWs, but the same proof holds for the case of QMCs, also see \cite{cv} for the case of positive maps.
	
	\begin{lemma}\label{propertyL}(Adapted from [\cite{c2017}, Lemma 2]).	Let $\Phi$ be a QMC, let $K$ be its hitting time operator and $D=\diag(K_{11},\dots,K_{nn})$, and define $L:=K-(K-D)\Phi$. Then for $\rho_j$ a density concentrated at site $j$, for all $i$, it holds that $\Tr\big(L_{ij}(\rho_j)\big)=\Tr(\rho_j)=1$. As a consequence, $\Tr\big(L_{ij}\rho)=Tr(\rho)$ for every $\rho\in M_n$.
	\end{lemma}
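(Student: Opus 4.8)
The plan is to prove the identity by a \emph{conditioning on the first step} argument, the operator-valued analogue of the classical relation $K-(K-D)P=E$ recalled just before Theorem~\ref{BeyondIrreducibleTheorem}. Since each block $L_{ij}$ is linear in its argument, it suffices to take $c=1$ and show $\Tr(L_{ij}\rho_j)=1$ for every density $\rho_j$ concentrated at site $j$. First I would write out the $(i,j)$ block of $L$ explicitly. Because $D=\diag(K_{11},\dots,K_{nn})$ satisfies $D_{il}=\delta_{il}K_{ii}$, the diagonal contribution cancels the $l=i$ term of $K\Phi$, leaving $L_{ij}=K_{ij}-\sum_{l\neq i}K_{il}\Phi_{lj}$. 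Recalling that $\Tr(K_{ij}\rho_j)=\tau(\rho_j\otimes|j\rangle\to|i\rangle)$ is the mean time to reach site $i$ from $\rho_j$ at site $j$, the target identity is then equivalent to the recursion
$$\Tr(K_{ij}\rho_j)=1+\sum_{l\neq i}\Tr\big(K_{il}\Phi_{lj}(\rho_j)\big),$$
so the whole problem reduces to establishing this first-step recursion for mean hitting times.

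To establish the recursion I would fix the target $i$, let $\mathbb{P}_i$ be the projection onto that site and $\mathbb{Q}_i=I-\mathbb{P}_i$, and decompose the first-visit probabilities $\pi_r(\rho_j\to i)=\Tr(\mathbb{P}_i\Phi(\mathbb{Q}_i\Phi)^{r-1}\rho_j)$ by their first step. For $r=1$ this is $\Tr(\Phi_{ij}\rho_j)$, while for $r\geq 2$ the first step must avoid $i$, so the block action of $\Phi$ gives $\pi_r(\rho_j\to i)=\sum_{l\neq i}\pi_{r-1}\big(\Phi_{lj}(\rho_j)\to i\big)$. Inserting this into $\tau(\rho_j\to i)=\sum_{r\geq1}r\,\pi_r(\rho_j\to i)$, performing the index shift $r=s+1$ and splitting the weight $s+1$ into $s$ and $1$, the series breaks into three pieces: the immediate term $\Tr(\Phi_{ij}\rho_j)$; the total first-visit mass $\sum_{l\neq i}\pi(\Phi_{lj}\rho_j\to i)$; and the continued mean time $\sum_{l\neq i}\tau(\Phi_{lj}\rho_j\to i)=\sum_{l\neq i}\Tr(K_{il}\Phi_{lj}\rho_j)$, where the last equality is the definition of the hitting time operator $K_{il}$ together with its linearity in the initial sub-density. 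Invoking that the probability of ever reaching $i$ equals the initial mass, $\pi(\sigma\to i)=\Tr(\sigma)$, the first two pieces combine into $\sum_{l}\Tr(\Phi_{lj}\rho_j)=\Tr(\Phi(\rho_j\otimes|j\rangle))=\Tr(\rho_j)=1$ by trace preservation of $\Phi$. This is precisely the recursion, and hence $\Tr(L_{ij}\rho_j)=1$.

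The analytic content, and the main obstacle, sits in two spots: the rearrangement of the doubly-indexed series and the identity $\pi(\sigma\to i)=\Tr(\sigma)$. Both are governed by the same hypothesis. Under Assumption~I (equivalently, the irreducibility assumed in \cite{c2017}), the generating function $\mathbb{G}(z)$ is analytic at $z=1$; this simultaneously forces every mean hitting time $\tau(\sigma\to i)$ to be finite, so that the sub-densities $\Phi_{lj}\rho_j$ have finite continued hitting times and Fubini legitimizes both the interchange and the index shift, and it forces the hitting probability to $i$ to equal $1$, yielding $\pi(\sigma\to i)=\Tr(\sigma)$ by linearity of $\pi$ in the initial operator. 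I would therefore carry out the reorganization of the series only after invoking the analyticity of $\mathbb{G}$ at $z=1$ from \cite{glv,cv}, which underwrites both the convergence and the full hitting probability used above.
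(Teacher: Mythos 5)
Your proof is correct and follows essentially the same route the paper intends: the paper cites \cite{c2017} for this lemma and only sketches the first-step-conditioning heuristic in Section~\ref{sec7}, and your argument — the block identity $L_{ij}=K_{ij}-\sum_{l\neq i}K_{il}\Phi_{lj}$, the decomposition of $\pi_r(\rho_j\to i)$ by the first step, the index shift splitting the series into the continued mean time plus the total first-visit mass, and the combination $\Tr(\Phi_{ij}\rho_j)+\sum_{l\neq i}\Tr(\Phi_{lj}\rho_j)=1$ via hitting probability one and trace preservation — is precisely the quantum analogue of the classical recursion $k_{ij}=1+\sum_{l\neq i}p_{lj}k_{il}$ recalled there. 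One small correction of wording: Assumption~I is strictly weaker than, not equivalent to, the irreducibility assumed in \cite{c2017}; either hypothesis suffices for your argument because each guarantees finiteness of the mean hitting times and full hitting probability for the relevant target site, which are exactly the two analytic inputs you invoke.
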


Together with the results stated above, the following proof is inspired by ideas seen in \cite{oqwmhtf2,meyer-artigo}.

\medskip
	
{\bf Proof of Theorem \ref{BeyondIrreducibleTheorem}.} We start with the definition $L:=K-(K-D)\Lambda$, where $D=K_d$, and rearrange it to obtain 
	\begin{align}\label{equationforKgroupinv}
		K(I-\Lambda)=L-D\Lambda.
	\end{align} 
	Operator $L$ is well-defined due to Assumption I (also see the remark at the end of Section \ref{sec7}). Now define $A:=I-\Lambda$ and we consider its group inverse $A^\#$, which exists by Proposition \ref{groupinverseexists}.  Now, in Theorem \ref{theoremofginverses}, take $F=I$, $X=K$, $B=A$ and $C=L-D\Lambda$, so we have that equation (\ref{equationforKgroupinv}) has a solution $K$ if, and only if, $F^-FCB^-B=C$, which is equivalent to $(L-D\Lambda)A^\#A=L-D\Lambda$, that is,
	\begin{align*}
		(L-D\Lambda)(I-A^\#A)=0 \quad \Longleftrightarrow \quad KA(I-A^\#A)=0.
	\end{align*} But it is clear that this last equation is satisfied, due to the property $AA^\#A=A$. So, we have that the solution to (\ref{equationforKgroupinv}) can be written as 
	\begin{align}\label{solutionKgroupinv}
		K=(L-D\Lambda)A^\#+V(I-AA^\#), 
	\end{align}where $V$ is an arbitrary matrix.
	
	\medskip
	
By Lemma \ref{mnov_lemma}, the operator $I-A^\#A$ projects onto the space of fixed points of $\Lambda$. Moreover, we note that for each $i$, vector $(I-A^\#A)|e_i\rangle$ is the $i$-th column of $I-A^\#A$, and that these are fixed points for $\Lambda$, due to the property $AA^\#A=A$. Also, by the fact that $\Lambda$ is a QMC acting on two vertices, we may write
	 \begin{align}\label{eqcols}
	 I-A^\#A=\begin{bmatrix}
			X & X\\ Y & Y
		\end{bmatrix},
	\end{align}
	 for certain $X, Y\in M_{k^2}$. The reason for this is simple: note that for any choice of $\rho_1$, $\rho_2$,
	 \begin{align}
	 \begin{bmatrix}
			X & X\\ Y & Y
		\end{bmatrix}\begin{bmatrix} \rho_1 \\ \rho_2\end{bmatrix}=\begin{bmatrix} X(\rho_1+\rho_2) \\ Y(\rho_1+\rho_2)\end{bmatrix},
			\end{align}
		so that any choice of density will be projected onto the image of $X$ and $Y$, with respect to vertices 1 and 2, respectively. It is then clear that the analogous behavior will apply to $V(I-AA^\#)$, for any matrix $V$, that is, these will be matrices of the form (\ref{eqcols}) accordingly. Due to these facts, if we define 
	\begin{align*}
		E:=\begin{bmatrix}
			I_{k^2} & I_{k^2} \\ I_{k^2}  & I_{k^2}
		\end{bmatrix},
	\end{align*} we will have $(I-AA^\#)_d E = (I-AA^\#)$ and $\big(V(I-AA^\#)\big)_d E = V(I-AA^\#)$. 
	
	\medskip
	
	Define $B:=\big(V(I-AA^\#)\big)_d$ and substitute $V(I-AA^\#)=BE$ in (\ref{solutionKgroupinv}) in order to write
	\begin{align}\label{equationforK2groupinv}
		K=(L-D\Lambda)A^\#+BE.
	\end{align} We take the diagonal and obtain \begin{align*}
		D=K_d =(LA^\#)_d -D(\Lambda A^\#)_d + B
		\quad \Longrightarrow \quad B=D+D(\Lambda A^\#)_d-(LA^\#)_d.
	\end{align*}
	Now we define $W:=L-D(I-AA^\#)$ and use it to eliminate $L$ in the equation for $B$ above, so we get:
	\begin{align}\label{equationforBgroupinv}
		B&=D+D(\Lambda A^\#)_d - (WA^\#)_d- \big(D(I-AA^\#)A^\#\big)_d \nonumber \\
		&=D+D(\Lambda A^\#)_d - (WA^\#)_d,
	\end{align} where one term was canceled since $(I-AA^\#)A^\#=(I-A^\#A)A^\#=A^\#-A^\#AA^\#=0$. Substituting $B$ from (\ref{equationforBgroupinv}) in (\ref{equationforK2groupinv}), and eliminating $L$ using $W$, we have
	\begin{align}\label{equationforK3groupinv}
		K&=\Big(W+D(I-AA^\#)-D\Lambda\Big)A^\#+DE+D(\Lambda A^\#)_dE -(WA^\#)_dE \nonumber \\
		&=D\Big[-\Lambda A^\#+E+(\Lambda A^\#)_dE\Big] +WA^\#-(WA^\#)_dE,
	\end{align} where again a term has vanished since $(I-AA^\#)A^\#=0$. 	Now consider $H:=(I-AA^\#)_d$. We know $HE=I-AA^\#$, so 
	\begin{align*}
		I-AA^\#=I-A^\#+\Lambda A^\# = HE,
	\end{align*} and hence, taking the diagonal we obtain 
	\begin{align}\label{simplificationgroupinv}
		& I-(A^\#)_d+(\Lambda A^\#)_d=H \nonumber\\
		\Longrightarrow &E-(A^\#)_dE+(\Lambda A^\#)_dE= HE = I-A^\#+\Lambda A^\# \nonumber \\
		\Longrightarrow & -\Lambda A^\#+E+(\Lambda A^\#)_dE = I-A^\#+A^\#_dE.
	\end{align} By replacing (\ref{simplificationgroupinv}) into (\ref{equationforK3groupinv}), we obtain
	\begin{align}\label{solutionfinalforKgroupinv}
		K=D\Big[I-A^\#+A^\#_dE\Big]+WA^\#-(WA^\#)_dE.
	\end{align} It remains to show that $\Tr\Big((WA^\#)_{12}\rho\Big)$ and $\Tr\Big(\big[(WA^\#)_dE\big]_{12}\rho\Big)$ are zero for any $\rho$.  Note that by multiplying (\ref{equationforKgroupinv}) on the right by any $\Lambda$-invariant vector $|\rho\rangle$, we obtain
	$ L|\rho\rangle = D|\rho\rangle$, whence
	\begin{align}\label{kacgroupinv}
		\Tr\big(K_{11}|\rho_1\rangle\big) = \Tr\big((L|\rho\rangle)_1\big)=\sum_j \Tr\big(L_{1j}|\rho_j\rangle\big) 
		= \sum_j \Tr(\rho_j),
	\end{align} where in the third equality we have used Lemma \ref{propertyL}.
	Also recall that for any vector $|\rho\rangle$, the new vector $(I-AA^\#)|\rho\rangle$ will be a state which is invariant by $\Lambda$, with 
	\begin{align}\label{trace1groupinv}
		\sum_j \Tr\Big(\big((I-AA^\#)|\rho\rangle\big)_j\Big)&= \Tr\Big((I-AA^\#)|\rho\rangle \Big) \nonumber \\ &
		=\langle e_{I^2_{k^2}}|(I-AA^\#)|\rho\rangle =\langle e_{I^2_{k^2}}|\rho\rangle \nonumber \\
		&=\sum_j \langle e_{I_{k^2}}|\rho_j\rangle = \sum_j \Tr(\rho_j).
	\end{align} 
	If $|\rho\rangle$ is a vector concentrated at site $m$, i.e., if it is of the form
	\begin{align*}
		|\rho\rangle = \begin{bmatrix}
			\rho \\ 0
		\end{bmatrix} \text{ or } \begin{bmatrix}
			0 \\ \rho
		\end{bmatrix},
	\end{align*} then equation (\ref{trace1groupinv}) reduces to 
	\begin{align}\label{traceinvgroupinv}
		\sum_j \Tr\Big(\big((I-AA^\#)|\rho\rangle\big)_j\Big) = \Tr\big(\rho\big).
	\end{align}\
	
	Now we proceed to show that the terms involving $W$ in (\ref{solutionfinalforKgroupinv}) will have trace zero. First, we have that
	\begin{align}\label{firstWtracegroupinv}
		\Tr\Big((WA^\#)_{1r}\rho\Big) &= \Tr\big((LA^\#)_{1r}\rho\big)-\Tr\Big(\big(D(I-AA^\#)A^\#\big)_{1r}\rho\Big) \nonumber \\
		&=\sum_m \left[ \Tr\Big(L_{1m}A^\#_{mr}\rho\Big)-\Tr\Big(K_{11}(I-AA^\#)_{1m}A^\#_{mr}\rho\Big)\right]\nonumber \\
		&=\sum_m \left[ \Tr\Big(A^\#_{mr}\rho\Big)-\Tr\Big(K_{11}(I-AA^\#)_{1m}A^\#_{mr}\rho\Big)\right],
	\end{align} where again Lemma \ref{propertyL} was applied for $L$, and index $r$ can be either 1 or 2. Note that 
	\begin{align*}
		(I-AA^\#)_{1m}A^\#_{mr}\rho = \big[(I-AA^\#)|\rho \rangle\big]_1,
	\end{align*} where $|\rho\rangle$ is the vector with $A^\#_{mr}\rho$ concentrated at site $m$. So, for this choice of $|\rho\rangle$, 
	\begin{align}\label{firstWconclusiongroupinv}
		\Tr\Big(K_{11}(I-AA^\#)_{1m}A^\#_{mr}\rho\Big)&=\Tr\Big(K_{11}\big[(I-AA^\#)|\rho \rangle\big]_1\Big) \nonumber \\
		&=\sum_j\Tr\Big(\big[(I-AA^\#)|\rho\rangle\big]_j\Big)\nonumber \\
		&=\Tr\Big(A^\#_{mr}\rho\Big),
	\end{align} where in the second equality we used equation (\ref{kacgroupinv}), and in the last equality we used (\ref{traceinvgroupinv}). Inserting (\ref{firstWconclusiongroupinv}) back into (\ref{firstWtracegroupinv}) we get
	\begin{align*}
		\Tr\Big((WA^\#)_{1r}\rho\Big) = \sum_m \left[\Tr\Big(A^\#_{mr}\rho\Big)-\Tr\Big(A^\#_{mr}\rho\Big)\right]=0.
	\end{align*} It is immediate from the above with $r=2$ that $\Tr\Big((WA^\#)_{12}\rho\Big)=0$. But also for $r=1$ it gives us
	\begin{align*}
		\Tr\Big(\big((WA^\#)_dE\big)_{12}\rho\Big)=\Tr\Big((WA^\#)_{11}\rho\Big)=0.
	\end{align*}\
	
	Therefore, when we calculate $\Tr\big(K_{12}\rho\big)$ using (\ref{solutionfinalforKgroupinv}), the terms involving $W$ vanish and the result follows.
	
\qed

{\bf Conflict of interest statement.} The authors declare that they have no conflict of interest.

\medskip

{\bf Data availability statement.} The datasets generated during the current study are available from the corresponding author on reasonable request.

\end{document}